\documentclass{amsart}
\usepackage{amssymb, amsmath, color}
\usepackage[latin1]{inputenc}
\usepackage{graphicx}
\usepackage{color}


\newtheorem{theorem}{Theorem}
\newtheorem{lemma}[theorem]{Lemma}
\newtheorem{remark}[theorem]{Remark}
\newtheorem{definition}[theorem]{Definition}
\newtheorem{corollary}[theorem]{Corollary}


\textwidth= 15cm 
\textheight= 23cm 
\topmargin -0.5cm 
\hoffset=-1,3cm
\voffset=-0,5cm

\begin{document}

\title[Relative position between a paraboloid and an ellipsoid]
{Classification of the relative positions between an ellipsoid and an elliptic paraboloid}
\author[Brozos-V\'{a}zquez \, Pereira-Sáez \, Souto-Salorio \,  Tarrío-Tobar]{M. Brozos-V\'{a}zquez \, M.J. Pereira-Sáez \, M.J. Souto-Salorio \, Ana D. Tarrío-Tobar}
\address{MBV: Universidade da Coru\~na, Differential Geometry and its Applications Research Group,  Escola Polit\'ecnica Superior,  Spain}
\email{miguel.brozos.vazquez@udc.gal}
\address{MJPS: Universidade da Coru\~na, Departamento de Econom\' ia, Facultade de Econom\' ia e Empresa, Spain}
\email{maria.jose.pereira@udc.es}
\address{MJSS: Universidade da Coru\~na, Departamento de Computaci\'on, Facultade de Inform\'atica, Spain}
\email{ maria.souto.salorio@udc.es}
\address{ADTT: Universidade da Coru\~na, Departamento de Matem\'{a}ticas, Escola T\'ecnica de Arquitectura, Spain}
\email{madorana@udc.es}
\thanks{Supported by Projects ED431F 2017/03, MTM2016-75897-P and MTM2016-78647-P (AEI/FEDER, UE)}
\subjclass[2010]{15A18, 65D18.}
\keywords{Quadric, elliptic paraboloid, ellipsoid, relative position, contact detection, characteristic polynomial.}

\begin{abstract}
We classify all the relative positions between an ellipsoid and an elliptic paraboloid when the ellipsoid is small in comparison with the paraboloid ({\it small} meaning that the ellipsoid cannot be tangent to the paraboloid at two points simultaneously). This provides an easy way to detect contact between the two surfaces by a direct analysis of the coefficients of a fourth degree polynomial.
\end{abstract}

\maketitle

\section{Introduction}\label{sect:introduction}
As the simplest curved surfaces, quadrics have been extensively used  in CAD/CAM and industrial design. This is justified by the fact that they can be manipulated through a simple algebraic expression, which makes their geometry more tractable for computational purposes. Another advantage is that quadric surfaces can be used to approximate other more complicated surfaces locally up to order two or to build a variety of shapes piecewise \cite{levin76,patricalakis-maekawa,wang2002,yan-wang-yang}.


The problem of contact detection is essential in fields such as robotics, computer graphics, computer animation, CAD/CAM, etc. There exists an extensive literature about contact detection between quadric surfaces (see \cite{Levin79} for a classical reference and references therein such as \cite{Wang2009}), but one of the main tools relies on the use of a characteristic polynomial associated to the pencil of the quadrics to treat the collision detection problem. Indeed that was done in previous works considering conic curves (see \cite{alberich2017,Choi2006,etayo2006,Liu2004}) and other quadric surfaces, especially ellipsoids, in Euclidean or Projective spaces  (see \cite{X2011,wank-wang-kim,Wang03}).

Although the literature dealing with the relative positions of two ellipsoids is large (see \cite{Choi2003,Wang2004,wank-wang-kim} among others), there is a lack of information on the same problem associated to other quadrics that would fit different geometric features better than an ellipsoid when considered in practical contexts. With this motivation, we address the problem of finding the relative position between an ellipsoid and an elliptic paraboloid. 

A direct approach to the problem of looking for the intersection between an ellipsoid and an elliptic paraboloid involves to solve a system of two polynomial equations of degree two. In order to avoid this  task, the proposed method provides a quick answer in terms of the coefficients of a polynomial of degree four. Additionally, the resulting algorithms do not only discern whether there is contact or not but also give information on the relative position between the quadrics. Thus, this work sets the theoretical foundations of a method to detect the relative position between the two  quadrics for the follow-up applications in specific practical contexts.

The classification of the relative positions leads to an extra natural assumption related to the size of the ellipsoid in comparison with the size of the paraboloid. This is given in Section~\ref{section:classification} and further explained in the Appendix (see Section~\ref{section:appendix-smallness}). The paper is organized as follows. The characterization of relative positions in terms of the roots of the characteristic polynomial associated to the pencil of the quadrics is given in  Section~\ref{section:classification}, where possible applications are also treated. In Section~\ref{section:characteristic-polynomial} we analyze the characteristic polynomial and reduce the problem to study a paraboloid and a sphere. The particular case where the center of this sphere is in the $OZ$-axis is studied in Section~\ref{section:oz-axis}, then extended to the whole space in Section~\ref{section:relative positions}, to prove the main result in Section~\ref{section:theproofofth}. Conclusions are summarized in Section~\ref{section:conclusions} and, finally, some complementary material is provided in the Appendix to further enlighten more technical aspects of the developed approach.

\section{Classification of the relative positions}\label{section:classification}

\subsection{Quadrics and characteristic polynomial.}\label{subsect:quadrics}

We consider a general elliptic paraboloid $\mathcal{P}$ with associated matrix $P$ and a general ellipsoid $\mathcal{E}$ with associated matrix $E$, in such a way that in homogeneous coordinates $X=(x,y,z,1)^t$ the corresponding equations are, respectively,
$X^t P X=0$ and $X^t E X=0$.

We consider the pencil of the quadrics  $\lambda P+E$ in order to define the \emph{ characteristic polynomial} of $\mathcal{P}$ and $\mathcal{E}$ as
\begin{equation}\label{eq:characteristicpolynomial}
f(\lambda)=\operatorname{det}(\lambda P+E).
\end{equation}
We will refer to the roots of the polynomial $f$ as the \emph{characteristic roots} of $\mathcal{P}$ and $\mathcal{E}$.
\subsection{The ``smallness'' condition}
All along this work we will assume that the ellipsoid is {\it small} in comparison with the elliptic paraboloid. More precisely, we will assume the following: 

\smallskip

{\it ``Smallness'' hypothesis: the ellipsoid and the elliptic paraboloid cannot be tangent at two points simultaneously.}

\smallskip

This hypothesis has a very specific geometric meaning and is naturally motivated by the classification of relative position between quadrics. This condition is equivalent to the fact that the ellipsoid and the paraboloid intersect in just one point or a curve with only one connected component. Thus, other more complicated relative positions that include multiple tangent points or intersections in curves with two connected components are excluded by this hypothesis. 
Moreover, the ``smallness'' hypothesis can be phrased in terms of principal curvatures of the surfaces: the smallest principal curvature in the ellipsoid is greater than the largest principal curvature of the elliptic paraboloid (which is attained at the vertex point). We refer to the Appendix for details about this geometric approach and a characterization in terms of the parameters of the quadrics.

The avoidance of the ``smallness'' hypothesis leads to complicated positions which, in principle, are not so interesting if one only intends to detect contact between the surfaces (see, for example, \cite{Jia-Wang-Choi-Mourrain-Tu1,Levin79,Wang-Mourrain-Wang-2009,Wang-Goldman-Tu2003} for information on the intersection curves between quadric surfaces).

\subsection{Relative positions between the surfaces}
To analyze the relative positions between the two quadrics we establish the following definition.

\begin{definition} We say that:
	\begin{enumerate}
		\item 
		$\mathcal{P}$ and $\mathcal{E}$ are in \emph{contact} if there exists $X$ such that $X^tPX=X^tEX=0$;
		\item $\mathcal{P}$ and $\mathcal{E}$ are \emph{tangent} at a point $X$ if  $X^tPX=X^tEX=0$ and $Y^tPX=0$ if and only if $Y^tEX=0$, this is, $\mathcal{P}$ and $\mathcal{E}$ are in contact at $X$ and have the same tangent plane at $X$;
		\item a point $X$ is \emph{interior} to $\mathcal{P}$ if $X^tPX<0$ and \emph{exterior} to  $\mathcal{P}$ if $X^tPX>0$. By extension, we also say that $\mathcal{E}$ is interior (or exterior) to $\mathcal{P}$ if every point in $\mathcal{E}$ is interior (or exterior) to $\mathcal{P}$. 
	\end{enumerate}
\end{definition}

\begin{theorem}\label{th:th1}
Let $\mathcal{E}$ be an ellipsoid and $\mathcal{P}$  an elliptic paraboloid satisfying the ``smallness'' condition. The possible relative positions between $\mathcal{E}$ and $\mathcal{P}$ are those given in Table~\ref{table:main} below and they are characterized by the corresponding configuration of the characteristic roots.

\begin{table}
	\begin{tabular}{|c|c|}	
		\hline
		\multicolumn{2}{|c|}{\bf \phantom{123} Relative positions between an elliptic paraboloid $\mathcal{P}$ and a ``small'' ellipsoid $\mathcal{E}$\phantom{123}}	\\
		\multicolumn{2}{|c|}{\begin{minipage}{12cm} \small There are always two roots ($\lambda_1$ and $\lambda_2$) that satisfy:
	$\lambda_1\leq\lambda_2<0$, the configuration of the other two roots ($\lambda_3$ and $\lambda_4$) determines the relative position of the ellipsoid and the elliptic paraboloid.\end{minipage}} \\	
		\hline
		\multicolumn{2}{|c|}{
		$\mathcal{E}$ is interior to $\mathcal{P}$ (Type I) or $\mathcal{E}$ is tangent to $\mathcal{P}$ from inside (Type TI)} \\
		\multicolumn{2}{|c|}{
		\includegraphics[scale=0.4]{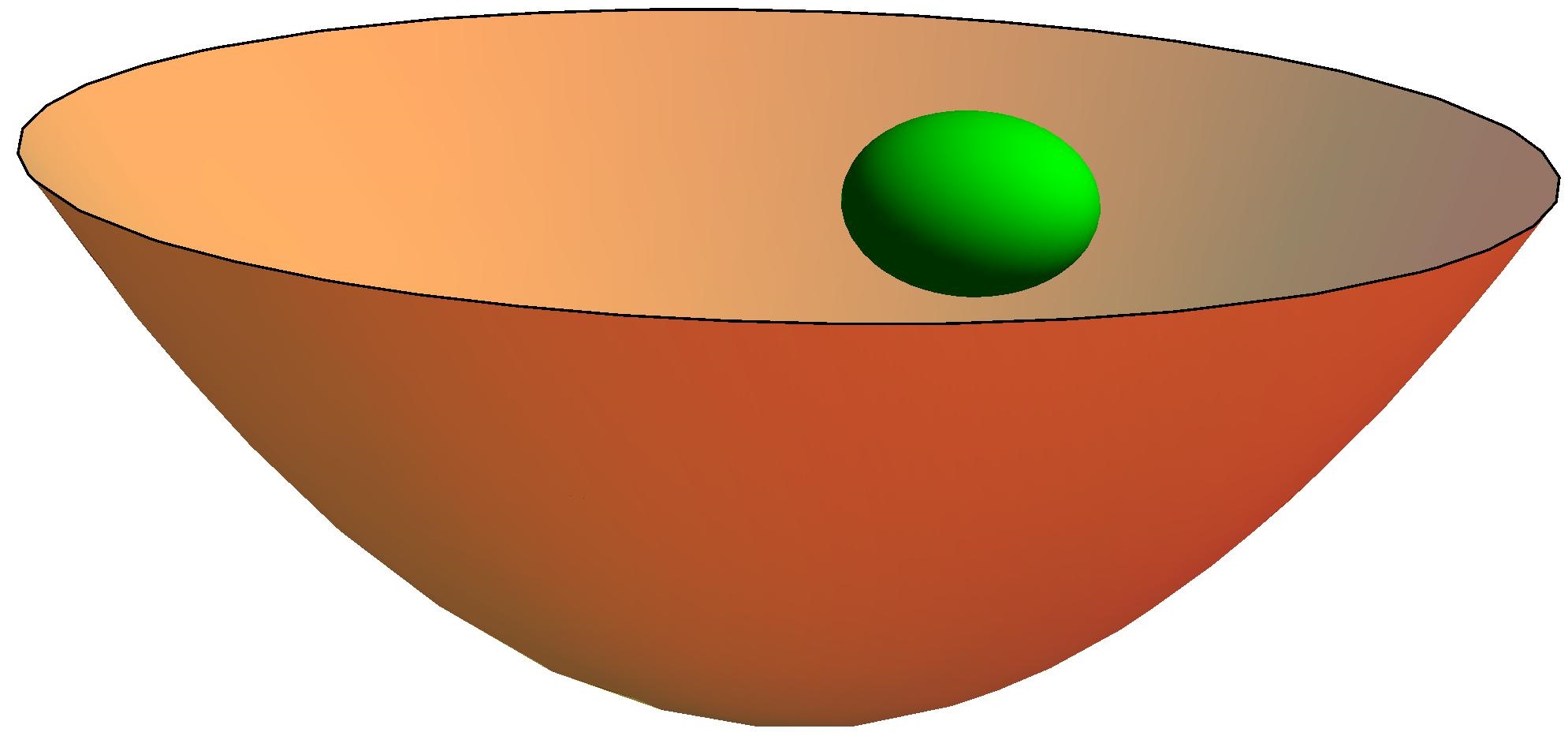}\qquad\qquad
			\includegraphics[scale=0.5]{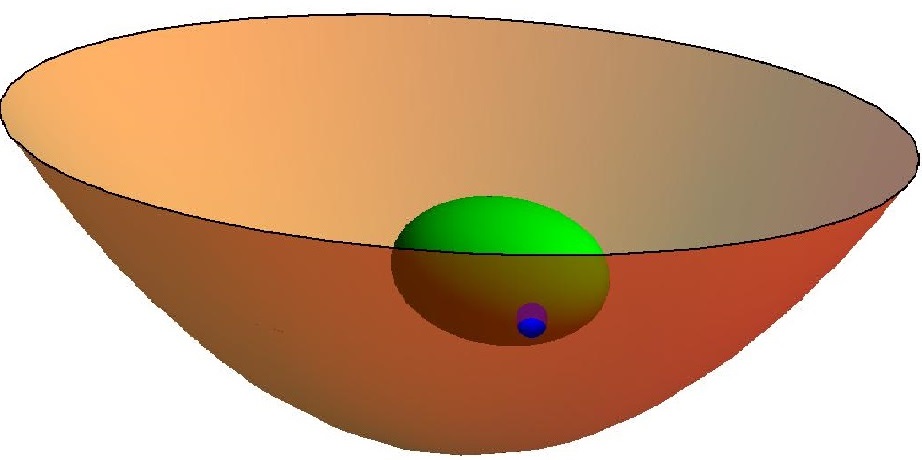}}\\
		\multicolumn{2}{|c|}{ 
			{\bf Roots:} }\\
		\multicolumn{2}{|c|}{$\lambda_3\leq \lambda_4<0$}\\
		\hline
			\multicolumn{2}{|c|}{Contact between $\mathcal{E}$ and $\mathcal{P}$ (Type C)}\\
			\multicolumn{2}{|c|}{	\includegraphics[scale=0.5]{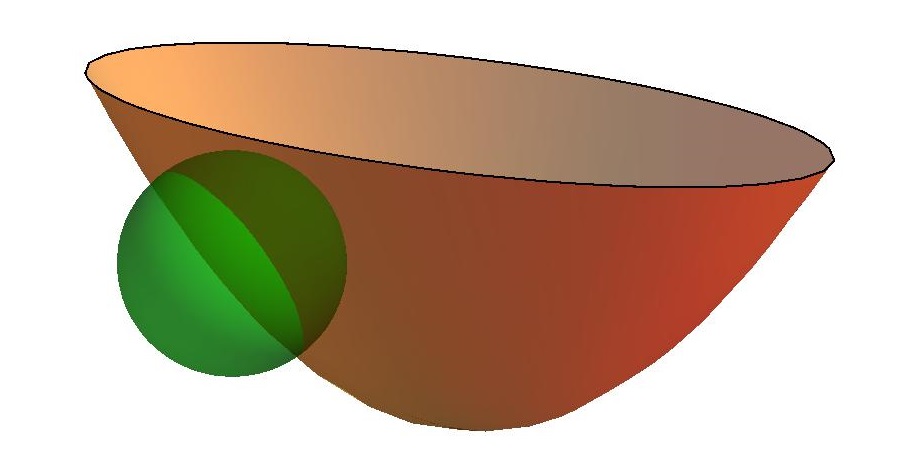}}\\
				\multicolumn{2}{|c|}{\bf Roots:}\\
					\multicolumn{2}{|c|}{$\lambda_3=\bar \lambda_4\notin\mathbb{R}$}\\
		\hline
			$\mathcal{E}$ is tangent to $\mathcal{P}$ from outside (Type TE)& $\mathcal{E}$ is exterior to $\mathcal{P}$ (Type E)\\
		\includegraphics[scale=0.5]{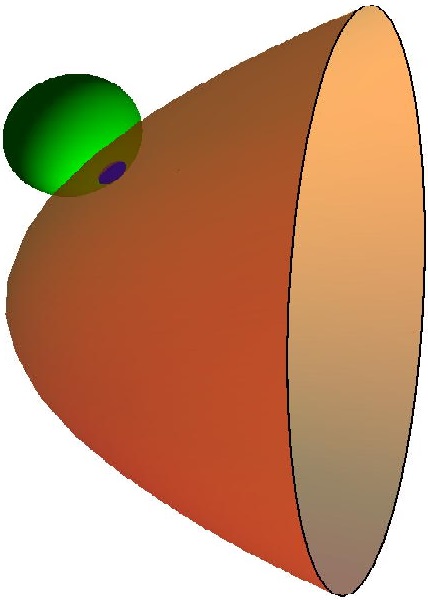}
		&	\includegraphics[scale=0.4]{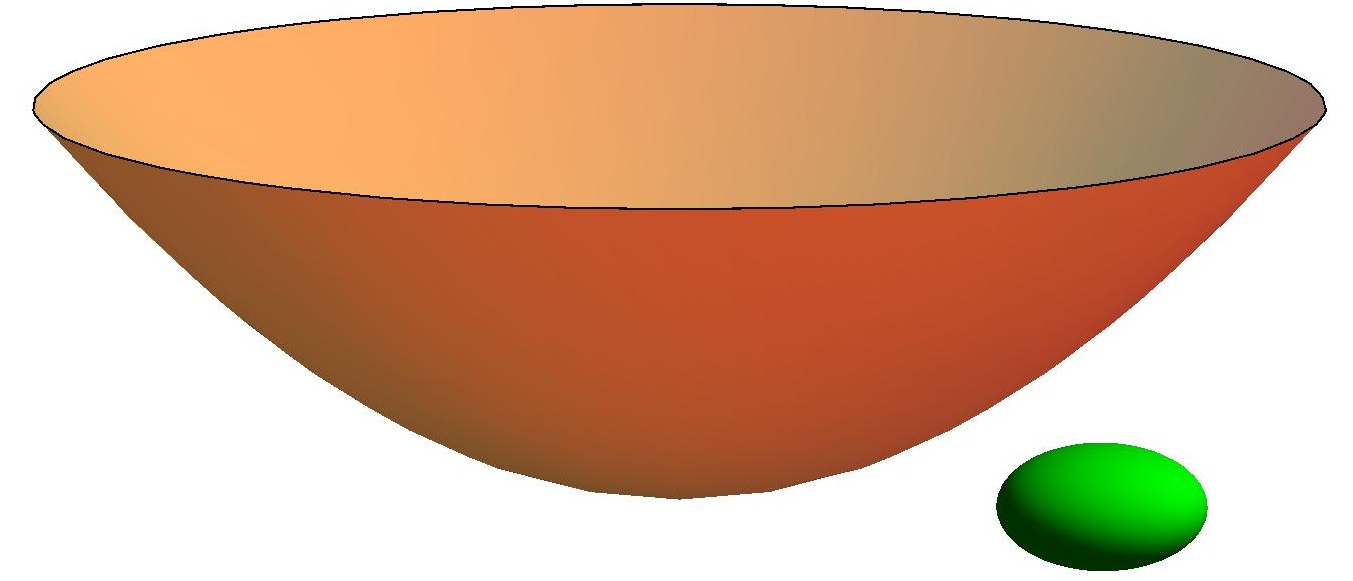}\\
		{\bf Roots:}& {\bf Roots:}\\
		$0<\lambda_3= \lambda_4$ & $0<\lambda_3<\lambda_4$\\
		\hline
	\end{tabular}
	\caption{Characterization of the relative positions between  a small ellipsoid and an elliptic paraboloid in terms of the characteristic roots.}
	\label{table:main}
\end{table}
\end{theorem}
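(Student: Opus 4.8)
The plan is to reduce the general problem to a normal form via affine changes of coordinates, and then track the characteristic roots as the ellipsoid moves.

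\medskip

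\textbf{Step 1: Reduction to a sphere and a paraboloid in standard position.} The characteristic polynomial $f(\lambda)=\det(\lambda P+E)$ is invariant, up to a nonzero multiplicative constant and reparametrization of $\lambda$, under congruences and affine changes that simultaneously act on $P$ and $E$. As announced in Section~\ref{section:characteristic-polynomial}, I would first apply an affine transformation that brings $\mathcal{P}$ to the canonical elliptic paraboloid $z=ax^2+by^2$ and, by a further diagonal scaling, turns $\mathcal{E}$ into a sphere. (This affine map need not be an isometry, but it does not change the combinatorial type of the roots, only their values.) After this reduction, the problem becomes: classify the relative positions of a sphere of radius $r$ centered at an arbitrary point $(x_0,y_0,z_0)$ with respect to a fixed paraboloid, in terms of the roots of the associated quartic. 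This is exactly the program laid out for Sections~\ref{section:oz-axis} and \ref{section:relative positions}.

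\medskip

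\textbf{Step 2: The axially symmetric case.} I would first treat the case where the center of the sphere lies on the $OZ$-axis (Section~\ref{section:oz-axis}). Here the characteristic polynomial factors, because the matrix $\lambda P+E$ has a block structure reflecting the rotational symmetry; two of the roots come from the ``radial'' block and can be shown to be $\lambda_1\le\lambda_2<0$ always (these are the two negative roots in the table, associated with the fact that the paraboloid is ``convex'' and the sphere is bounded), while the remaining quadratic factor governs $\lambda_3,\lambda_4$. A direct computation of this quadratic factor, together with the geometric meaning of its discriminant, sign of its leading coefficient, and sign of its constant term, gives the five cases: two negative roots (interior/tangent from inside), a double positive root (tangent from outside), two distinct positive roots (exterior), and a complex conjugate pair (transversal contact). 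The ``smallness'' hypothesis is what guarantees we never see, e.g., two distinct negative roots degenerating in a way that would correspond to double tangency; I would invoke the curvature characterization from the Appendix to rule those configurations out.

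\medskip

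\textbf{Step 3: From the axis to general position, and continuity/connectedness.} The passage to an arbitrary center (Section~\ref{section:relative positions}) is the heart of the argument and the step I expect to be the main obstacle. The idea is a continuity/deformation argument: move the center of the sphere continuously, keeping the radius fixed and small, and observe that (i) the characteristic roots vary continuously, (ii) a root crosses from the negative axis, through $0$ or through a complex pair, to the positive axis exactly when the geometric relative position changes (tangency, then contact, then separation), and (iii) the ``always negative'' pair $\lambda_1\le\lambda_2<0$ stays negative throughout because it is pinned down by a quantity (essentially related to $\det E$ and the leading behaviour of $f$) that never vanishes under smallness. The key lemma here is that $0$ is a root of $f$ precisely when $\mathcal{E}$ and $\mathcal{P}$ are tangent (this follows from the definition of tangency via the pencil: a common tangent plane forces $\det E=\det(0\cdot P+E)$ to vanish appropriately, or more precisely that the pencil contains a degenerate quadric through the contact point), and a real double root among $\lambda_3,\lambda_4$ corresponds to tangency while a complex pair corresponds to transversal intersection. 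One must check that no other sign pattern of the four roots is compatible with smallness — this is where one genuinely uses that the ellipsoid is too ``curved'' to wrap around the paraboloid, and I would handle it by showing that any other pattern would force, by an intermediate-value argument along a suitable one-parameter family, a configuration with two tangency points, contradicting the hypothesis. Finally I would assemble Steps 2 and 3 into the statement of Theorem~\ref{th:th1}, and reverse the affine reduction of Step 1 to conclude for the original ellipsoid and paraboloid, noting that each root-configuration in Table~\ref{table:main} is preserved (negativity, positivity, reality, multiplicity, and the ordering $\lambda_1\le\lambda_2$ versus $\lambda_3,\lambda_4$ are all affine-invariant features of $f$).
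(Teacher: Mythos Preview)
Your overall architecture---affine reduction to a sphere and a standard paraboloid, explicit analysis on the $OZ$-axis, then a continuity/path argument to reach arbitrary centers---matches the paper's. But Step~3 contains a genuine error that breaks the argument.

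You write: ``The key lemma here is that $0$ is a root of $f$ precisely when $\mathcal{E}$ and $\mathcal{P}$ are tangent.'' This is false. Since $f(0)=\det E$ and $E$ is the matrix of a nondegenerate ellipsoid (after reduction, a sphere), one has $f(0)=-r^2\neq 0$; thus $0$ is \emph{never} a characteristic root, regardless of the relative position. This fact is used in the paper in exactly the opposite way from what you propose: because no root can ever equal $0$, real roots cannot change sign along a continuous path of centers, which is what pins down ``two negative, two positive'' in the exterior case and ``four negative'' in the interior case. Tangency is instead detected by a \emph{multiple real} root, and the bridge between ``roots becoming complex'' and ``roots remaining real'' along a path is a double real root (Lemma~\ref{lemma:complexroots}), which geometrically corresponds to a tangency by Lemma~\ref{lemma:multipleroottangent}. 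Your proposed mechanism (roots crossing through $0$) simply does not occur.

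A second gap: you do not account for the exceptional double roots $-a^2$ and $-b^2$. These can occur \emph{without} any tangency (see Theorem~\ref{th:tangency} and the example following Lemma~\ref{lemma:multipleroottangent}), so the blanket statement ``double root $\Leftrightarrow$ tangent'' fails, and your path argument would wrongly conclude tangency at points where the path meets the planes $x_c=0$ or $y_c=0$. The paper handles this by choosing paths that avoid those planes (or stay entirely within them, so the exceptional root factors off), and by proving that a double root at $-a^2$ forces a triple root before it can signal tangency. Finally, the paper treats the circular case $a=b$ separately by a limiting argument; your Step~2 remark about ``rotational symmetry'' only applies there, not to the generic $a\neq b$ case where the axis factorization comes instead from Lemma~\ref{lemma:-a2y-b2root}.
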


\begin{remark}\label{remark:IeTI}\rm
	Types I and TI are both detected by four negative roots. The problem of distinguishing these two relative positions comes from the fact that if there are exactly two roots that can be double roots in the Type I position (called them $\bar \lambda$ and $\tilde \lambda$). These two special roots are determined in terms of the parameters of the quadrics (see Theorem~\ref{th:tangency}).  Thus, one can distinguish between the two relative positions once it is known that  $\bar \lambda$ and $\tilde \lambda$ are not characteristic roots. Hence, one has that:
	\begin{enumerate}
		\item $4$ different real negative roots imply Type I;
		\item $3$ different real negative roots, one of which has multiplicity two and is different from $\bar \lambda$ and $\tilde \lambda$, imply Type TI.
	\end{enumerate}
\end{remark}

Based on Theorem~\ref{th:th1}, one can detect the relative position of $\mathcal{E}$ and $\mathcal{P}$ by a direct analysis of the coefficients of the characteristic polynomial $f$ as follows.
The first step will be to compute the characteristic polynomial $f$ and transform it in a monic polynomial $p$:
\[
p(\lambda)=\lambda^4+ c_3 \lambda_3+c_2 \lambda_2+c_1 \lambda+c_0.
\]
The discriminant of the fourth degree polynomial $p(\lambda)$ is given by the expression (see, for example, \cite{Irving}):
\[
\begin{array}{l} 
\Delta =    256c_0^3 - 192c_3c_1c_0^2-128c_2^2c_0^2 + 144c_2c_1^2c_0 - 27 c_1^4 
+ 144c_3^2c_2c_0^2  -  6c_3^2c_1^2 c_0 -  80c_3c_2^2c_1c_0   \\
\noalign{\smallskip}
\phantom{\Delta =} +   18c_3c_2c_1^3 + 16c_2^4c_0 -4c_2^3c_1^2 - 27 c_3^4c_0^2     +18c_3^3c_2c_1c_0 - 4c_3^3c_1^3-4c_3^2c_2^3c_0 + c_3^2c_2^2c_1^2.
\end{array}
\]

In view of Theorem~\ref{th:th1}, there are neither two pairs of complex conjugate roots nor a double root and a pair of complex conjugate roots. Therefore, the non-tangent contact position is characterized by two complex conjugate roots (see Table~\ref{table:main}), so it is also characterized by $\Delta<0$ (see, for example, \cite{Jacobson}). The case $\Delta=0$ characterizes multiple roots, which are related with tangent contact. More specifically, double roots indicate tangent contact except in two particular cases which are directly determined by the parameters of the paraboloid (for a paraboloid in standard form as in \eqref{eq:paraboloid} the exceptions are  $-a^2$ or $-b^2$, see Section~\ref{subsect:tangency-multiplicity} for details). $\Delta>0$ implies that there are $4$ different real roots, which results in non-contact relative positions. In the later case, the Descartes' rule of signs allows to distinguish the interior and exterior cases as in Table~\ref{table:coefficients}. This is summarized in the following result.

\begin{corollary}
Let $\mathcal{E}$ be an ellipsoid and $\mathcal{P}$  an elliptic paraboloid satisfying the ``smallness'' condition. The relative positions between $\mathcal{E}$ and $\mathcal{P}$ are detected in terms of the coefficients of $p$ as shown in Table~\ref{table:coefficients}.

\begin{table}
	\begin{tabular}{|c|c|}
		\hline
		\multicolumn{2}{|c|}{Conditions to classify the relative positions between  $\mathcal{E}$ and $\mathcal{P}$.}\\
		\hline
		Type&  Conditions on the coefficients of $p(\lambda)$\\
		\hline\hline
		I or TI &  \begin{tabular}{c}
		$\Delta \geq 0$ and $c_i\geq 0$ for all $i=1,2,3$
		\end{tabular}		 \\
		\hline
		C & $\Delta<0$\\
		\hline
		TE& \begin{tabular}{c}
			$\Delta=0$ and $c_i<0$ for some $i=1,2,3$
		\end{tabular}\\
		\hline
		E& \begin{tabular}{c}
			$\Delta>0$ and $c_i<0$ for some $i=1,2,3$
		\end{tabular}\\
		\hline
	\end{tabular}
\caption{Characterization of the relative positions between  a small ellipsoid and an elliptic paraboloid in terms of the coefficients of the characteristic polynomial.}
\label{table:coefficients}
\end{table}
\end{corollary}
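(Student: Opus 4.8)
The plan is to derive the Corollary as a direct consequence of Theorem~\ref{th:th1} together with standard facts about the discriminant of a quartic. First I would recall that by Theorem~\ref{th:th1} the characteristic polynomial always has two roots $\lambda_1\le\lambda_2<0$, and the relative position is governed entirely by the pair $(\lambda_3,\lambda_4)$: four real negative roots give Type~I or TI, a complex conjugate pair $\lambda_3=\bar\lambda_4\notin\mathbb R$ gives Type~C, a positive double root $0<\lambda_3=\lambda_4$ gives Type~TE, and $0<\lambda_3<\lambda_4$ gives Type~E. The ``smallness'' hypothesis, as noted in the paragraph before the Corollary, rules out the two remaining quartic root patterns (two conjugate pairs, or a real double root together with a conjugate pair), so exactly these four configurations occur.

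Next I would translate each root configuration into conditions on $\Delta$ and on the signs of $c_1,c_2,c_3$. For the sign of $\Delta$: a real quartic has $\Delta>0$ iff all four roots are real and distinct or all four are non-real (distinct conjugate pairs); $\Delta<0$ iff there are exactly two real roots and a conjugate pair; and $\Delta=0$ iff there is a repeated root. Since the two-conjugate-pairs case is excluded, $\Delta>0$ forces four distinct real roots (Type~I or~E depending on the other data), $\Delta<0$ forces Type~C (two real, one conjugate pair — and the two real roots are the negative $\lambda_1,\lambda_2$), and $\Delta=0$ forces a repeated root, hence Type~TI, TE, or the degenerate Type~I with a double root. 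This already isolates Type~C as precisely $\Delta<0$.

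To separate the $\Delta\ge 0$ cases I would apply Descartes' rule of signs to the monic polynomial $p(\lambda)=\lambda^4+c_3\lambda^3+c_2\lambda^2+c_1\lambda+c_0$. When all four roots are real, $p$ has all roots negative (Types~I, TI) iff $p(-\lambda)$ has no sign changes, i.e.\ iff $\lambda^4-c_3\lambda^3+c_2\lambda^2-c_1\lambda+c_0$ has all coefficients of one sign; since the leading coefficient and $c_0=\lambda_1\lambda_2\lambda_3\lambda_4>0$ are positive, this is equivalent to $c_1\ge 0$, $c_2\ge 0$, $c_3\ge 0$ (with a product-of-roots/elementary-symmetric-function argument showing the ``$\ge$'' is the right inequality, equality being possible only in degenerate sub-cases). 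Conversely, if $\lambda_3,\lambda_4>0$ while $\lambda_1,\lambda_2<0$, then at least one of the elementary symmetric functions $e_1=-c_3$, $e_2=c_2$, $e_3=-c_1$ must reflect the mixed signs; concretely $e_1=\lambda_1+\lambda_2+\lambda_3+\lambda_4$ and $e_3=\sum_{i<j<k}\lambda_i\lambda_j\lambda_k$ cannot both be $\le 0$ and $c_2\ge 0$ simultaneously once two roots are positive and two negative — this is the computation I would actually carry out, showing that at least one $c_i<0$. Combining: $\Delta\ge 0$ with all $c_i\ge 0$ gives Type~I/TI; $\Delta>0$ with some $c_i<0$ gives Type~E; $\Delta=0$ with some $c_i<0$ gives Type~TE; and $\Delta<0$ gives Type~C, which is exactly Table~\ref{table:coefficients}.

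The main obstacle I anticipate is the bookkeeping in the Descartes'-rule step: one must verify not merely that four negative roots imply $c_i\ge0$ (easy, from Vi\`ete) but the converse direction, that the sign pattern $c_1,c_2,c_3\ge 0$ together with $c_0>0$ and $\Delta\ge0$ genuinely excludes the mixed-sign case $\lambda_1,\lambda_2<0<\lambda_3,\lambda_4$. This needs a short argument ruling out the possibility that two positive roots and two negative roots happen to produce three nonnegative symmetric functions — one checks that $e_1\le 0$ and $e_3\le 0$ would force the positive roots to be dominated by the negative ones in a way incompatible with $e_2=c_2\ge 0$ and $e_4=c_0>0$. I would also need to confirm the boundary behaviour: equality $c_i=0$ is consistent only with the degenerate configurations already accounted for in Remark~\ref{remark:IeTI} (double negative roots), and the exceptional double roots $-a^2,-b^2$ mentioned in the text do not affect the coefficient-sign dichotomy since they are negative. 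With these points settled the Corollary follows immediately by matching each $(\Delta,\text{sign pattern})$ case to the corresponding row of Table~\ref{table:main}.
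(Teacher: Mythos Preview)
Your approach is correct and essentially identical to the paper's: the paper's argument (given in the paragraph immediately preceding the Corollary) likewise invokes Theorem~\ref{th:th1} to exclude the two ``bad'' quartic root patterns, uses the standard discriminant trichotomy to identify Type~C as $\Delta<0$ and the tangent cases as $\Delta=0$, and then appeals to Descartes' rule of signs to separate the interior from the exterior non-contact cases. The obstacle you flag is easier than you suggest: if $c_0>0$ and $c_1,c_2,c_3\ge 0$ then $p(\lambda)\ge\lambda^4>0$ for every $\lambda>0$, so positive roots are outright impossible, while the forward direction (four negative roots $\Rightarrow$ all $c_i>0$) is immediate from Vieta's formulas.
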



Possible algorithms to detect the relative position follow from Table~\ref{table:coefficients} and the analysis in the following sections. The data needed to analyze the positions is small enough for the implementation of efficient algorithms to be used in real-time continuous moving frameworks. For example, an algorithm to detect the relative position of an ellipsoid that moves continuously from the exterior of the elliptic paraboloid can be based only in the discriminant computation as follows:

\begin{itemize}
	\item If $\Delta<0$ then non-tangent contact (Type C), 
	\begin{itemize}
		\item[$\bullet$] else if $\Delta>0$ then exterior (Type E),\\
			\mbox{}\qquad	else tangent contact (Type TE).
	\end{itemize}
\end{itemize}
The previous algorithm just intends to illustrate the possibility of reasonable applications from Table~\ref{table:coefficients}. 
A deeper analysis in the development of algorithms should take care of the efficiency in the implementation and can follow the line of techniques already used to detect the positional relationship between conics or ellipsoids, we refer to \cite{alberich2017,etayo2006} and references therein.

\section{The characteristic polynomial}\label{section:characteristic-polynomial}

The set of roots of the characteristic polynomial given in \eqref{eq:characteristicpolynomial} is invariant under the action of the affine group. Indeed, for any nonsingular transformation $T$ one has that $\operatorname{det}(\lambda T^{t} PT+T^{t}ET)=\operatorname{det}(T)^2 \operatorname{det}(\lambda P+E)$. Therefore, in order to simplify the analysis of the relative positions we can always apply the composition of homotheties and rigid moves to the ellipsoid $\mathcal{E}$ and the paraboloid $\mathcal{P}$ as follows. First, by applying appropriate homotheties to the ellipsoid, it can be transformed into a sphere
 $\mathcal{S}$ of radius $r>0$ and center at $(x_c,y_c,z_c)$ (see, for example, \cite{affine-transf-book}):
\begin{equation}\label{eq:sphere}
\mathcal{S}:(x-x_c)^2+(y-y_c)^2+(z-z_c)^2=r^2\,.
\end{equation}
Thus, in homogeneous coordinates $X=(x,y,z,1)^t$, $\mathcal{S}$ is given by the equation $X^t S X=0$ where
\[
S=\left(   \begin{array}{rrrr}
1&0&0&-x_c \\
0&1&0&-y_c \\
0&0&1&-z_c \\
-x_c&-y_c&-z_c&-r^2+x_c^2+y_c^2+z_c^2
\end{array} \right ).
\]
Note that it would be possible to transform the sphere into one of radius $1$ too, but it does not carry a strong simplification and we prefer to work with a general radius $r$ to emphasize the role played by the radius along the subsequent arguments.
After converting the ellipsoid $\mathcal{E}$ into a sphere $\mathcal{S}$ by a transformation that also affects the paraboloid $\mathcal{P}$, we can apply an appropriate rotation and a translation  to the new elliptic paraboloid so that it is given in standard form (see \cite{affine-transf-book}): 
\begin{equation}\label{eq:paraboloid}
\mathcal{P}:\frac{x^2}{a^2}+\frac{y^2}{b^2}-z=0\,  \text{ for }  0<a\leq b.
\end{equation}
The $4\times 4$ matrix associated to $\mathcal{P}$ has the form
\[
P=\left(   \begin{array}{cccc}
a^{-2} &0&0&0 \\
0&b^{-2} &0&0\\
0&0&0&-1/2 \\
0&0&-1/2 &0
\end{array} \right ).
\]

We emphasize that this process does not carry a loss of generality since the relative position between the two quadrics and the characteristic roots remain unchanged. Therefore, in what follows, we work with a sphere $\mathcal{S}$  of radius $r$ and center $(x_c,y_c,z_c)$ as in \eqref{eq:sphere} and an elliptic paraboloid $\mathcal{P}$ given in standard form as in \eqref{eq:paraboloid}.

For $\mathcal{S}$  and $\mathcal{P}$ we compute the characteristic polynomial explicitly to obtain:
\begin{equation}\label{eq:characteristic-polynomial}
\begin{array}{rcl}
f(\lambda)&=&-\frac{1}{4a^{2}b^{2}}\left\{\lambda^{4}+(4z_c+a^{2}+b^{2}) \lambda^{3} \right.
 \\
\noalign{\medskip}
&&\left.+   (4z_c(a^{2}+ b^{2}) - 4(x_c^{2}+ y_c^{2}- r^2) +a^2b^2)\lambda^{2}\right.  \\
\noalign{\medskip}
&&\left. +  4  
( z_c a^2 b^2 -y_c^2 a^{2} -x_c^2 b^{2} +r^2(a^{2} +b^{2} ))\lambda\right\}  -r^2.
\end{array}
\end{equation} 
 The following is a general remark which will be crucial in the subsequent analysis. 
\begin{lemma}\label{lemma:tecnical-results-1}
	The characteristic roots of $f$ satisfy:
	\begin{enumerate}
		\item $0$ is not a root.
		\item The product of all roots is $4a^2b^2r^2>0$.
	\end{enumerate}
\end{lemma}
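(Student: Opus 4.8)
The plan is to read both statements directly off the closed form of $f$ in \eqref{eq:characteristic-polynomial}, using only the elementary relation between the coefficients of a polynomial and its roots; no geometry beyond $r>0$ is needed.

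For part (1) I would evaluate $f$ at $\lambda=0$. Every term inside the braces in \eqref{eq:characteristic-polynomial} carries a positive power of $\lambda$, so the brace vanishes at $\lambda=0$ and one is left with $f(0)=-r^2$. Since $\mathcal{S}$ is a genuine sphere, $r>0$, hence $f(0)=-r^2\neq 0$ and $0$ is not a characteristic root. (It is perhaps worth recording the coordinate-free version of this: $f(0)=\det(\lambda P+S)\big|_{\lambda=0}=\det S$, and a Schur-complement expansion of the matrix $S$ — it is a rank-one perturbation of $I_3$ in its leading block — gives $\det S=-r^2$ directly, which explains why the value is independent of the center $(x_c,y_c,z_c)$.)

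For part (2) I would apply Vieta's formula. From \eqref{eq:characteristic-polynomial} the coefficient of $\lambda^{4}$ in $f$ is $-\tfrac{1}{4a^{2}b^{2}}$ and the constant term is $-r^{2}$, so if $\lambda_1,\lambda_2,\lambda_3,\lambda_4$ are the four roots of $f$ counted with multiplicity then
\[
\lambda_1\lambda_2\lambda_3\lambda_4=\frac{-r^{2}}{-\tfrac{1}{4a^{2}b^{2}}}=4a^{2}b^{2}r^{2}.
\]
As $a,b,r>0$ this product is strictly positive; incidentally this re-proves (1), since a zero root would force the product to vanish. There is no real obstacle here — the lemma is an immediate consequence of the explicit polynomial already computed — and the only point requiring care is keeping track of the overall factor $-\tfrac{1}{4a^{2}b^{2}}$ when passing from \eqref{eq:characteristic-polynomial} to the monic form before invoking Vieta. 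The useful content for the later sections is simply the pair of conclusions: the characteristic polynomial does not degenerate at the origin, and its roots multiply to a positive number.
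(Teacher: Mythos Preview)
Your proof is correct and follows essentially the same approach as the paper: evaluate $f(0)=-r^2$ for part (1), and read off the product of the roots from the leading and constant coefficients of \eqref{eq:characteristic-polynomial} (equivalently, pass to the monic form by multiplying by $-4a^2b^2$) for part (2). Your added remark that $f(0)=\det S=-r^2$ is a nice observation but not needed.
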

\proof
Substituting in the expression of the characteristic polynomial, we see that $f(0)=-r^2\neq 0$, so $0$ is not a root of $f$. From expression \eqref{eq:characteristic-polynomial}, transform $f$ to a monic polynomial multiplying by $-4 a^2b^2$ to see that the independent coefficient, which equals the product of the roots, is $4a^2b^2r^2>0$.
\qed

Since we are going to work with a sphere $\mathcal{S}$  of radius $r$, the ``smallness'' condition becomes quite tractable, as it can be expressed in terms of the parameters of the paraboloid and the radius of the sphere.
	The condition for the circumference not to be tangent at two points of the parabola (see Figure~1) is that the curvature of the circle is greater than or equal to the curvature of the parabola at any point. Since the curvature of the circle at any point is $\frac{1}{r}$ and the maximum curvature of the parabola is $\frac{2}{a^2}$, this condition reads:
	\[
	\frac{1}{r}\geq \frac{2}{a^2} \text{ or, equivalently, } 2 r\leq a^2.
	\]

 \begin{figure}[h]
	\includegraphics[width=3cm]{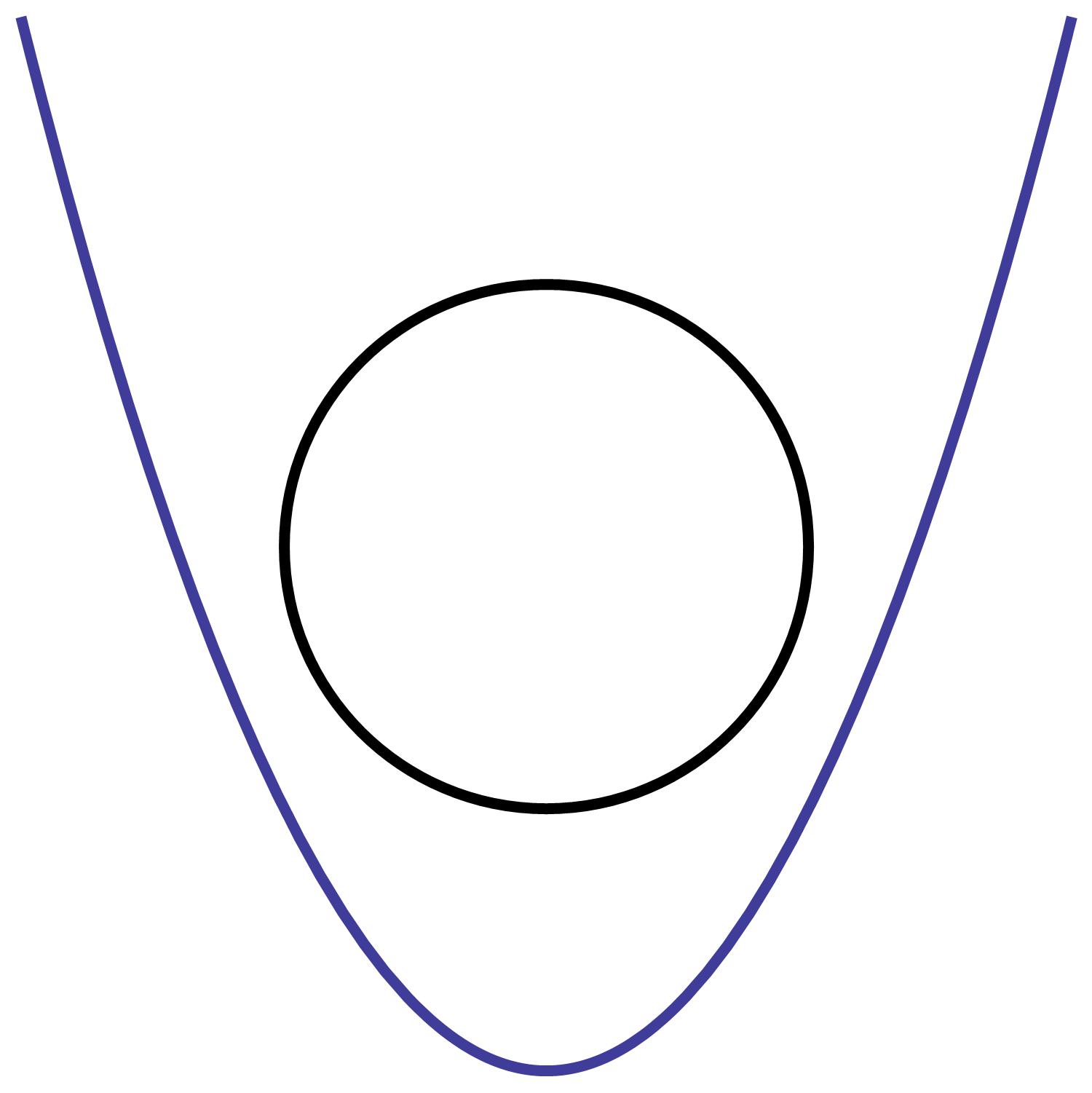}
	
	\caption{The maximum curvature of the parabola is attained at the vertex point. The condition for the circle not to be tangent at two points is that the curvature of the parabola at the vertex is less than or equal to the curvature of the circle.}\label{figure:parabola1}
\end{figure}



\section{Relative positions when the center of $\mathcal{S}$ is located at the $OZ$-axis.}\label{section:oz-axis}
As a first step in our approach to classify the relative positions of $\mathcal{S}$ and $\mathcal{P}$ in relation with the roots of $f$, we are going to consider the particular case in which the center of $\mathcal{S}$  is located in the $OZ$-axis. This location can be detected in terms of the roots of $f$ as follows.  Along this section we analyze the case $a\neq b$, as the case $a=b$ will be obtained as a consequence of this one. We adopt the notation established in Section~\ref{section:characteristic-polynomial} and assume the ``smallness'' hypothesis.

\begin{lemma}\label{lemma:-a2y-b2root}
	Assume $a\neq b$. The center of $\mathcal{S}$   is of the form $(0,0,z_c)$ if and only if 
 $-a^2$ and $-b^2$ are roots of $f$.
\end{lemma}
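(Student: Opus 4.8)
The plan is to evaluate the characteristic polynomial directly at the two candidate values $\lambda=-a^{2}$ and $\lambda=-b^{2}$. Since $-4a^{2}b^{2}\neq 0$, the monic polynomial $p(\lambda)=-4a^{2}b^{2}f(\lambda)$ has exactly the same roots as $f$, so $-a^{2}$ (resp. $-b^{2}$) is a characteristic root if and only if $p(-a^{2})=0$ (resp. $p(-b^{2})=0$). Working with $p$ instead of $f$ saves us from carrying the overall factor, and from \eqref{eq:characteristic-polynomial} its coefficients are $c_{3}=4z_{c}+a^{2}+b^{2}$, $c_{2}=4z_{c}(a^{2}+b^{2})-4(x_{c}^{2}+y_{c}^{2}-r^{2})+a^{2}b^{2}$, $c_{1}=4(z_{c}a^{2}b^{2}-y_{c}^{2}a^{2}-x_{c}^{2}b^{2}+r^{2}(a^{2}+b^{2}))$ and $c_{0}=4a^{2}b^{2}r^{2}$ (the last one consistent with Lemma~\ref{lemma:tecnical-results-1}).

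The computational heart is then the substitution $\lambda=-a^{2}$ into $p$. Writing $A=a^{2}$, $B=b^{2}$ to keep the bookkeeping light, the contributions involving $A^{4}$, $A^{3}B$, $z_{c}A^{3}$, $z_{c}A^{2}B$, $y_{c}^{2}A^{2}$, $r^{2}A^{2}$ and $r^{2}AB$ all cancel in pairs, and the surviving terms give
\[
p(-a^{2})=4a^{2}x_{c}^{2}\,(b^{2}-a^{2}).
\]
This is only a bounded expansion, but the pairwise cancellations are exactly what makes the statement true, so I would carry it out carefully; the one thing to be attentive to is that the term $-r^{2}$ in \eqref{eq:characteristic-polynomial} sits \emph{outside} the braces and contributes $+4a^{2}b^{2}r^{2}$ to $p$. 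By the evident invariance of each coefficient of $p$ under the simultaneous exchange $a\leftrightarrow b$, $x_{c}\leftrightarrow y_{c}$, the analogous substitution $\lambda=-b^{2}$ yields at once
\[
p(-b^{2})=4b^{2}y_{c}^{2}\,(a^{2}-b^{2}).
\]

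Finally I would invoke the hypotheses $0<a\leq b$ and $a\neq b$, which give $a^{2}>0$, $b^{2}>0$ and $b^{2}-a^{2}\neq 0$; hence $p(-a^{2})=0\iff x_{c}=0$ and $p(-b^{2})=0\iff y_{c}=0$. Consequently $-a^{2}$ and $-b^{2}$ are simultaneously characteristic roots if and only if $x_{c}=y_{c}=0$, i.e. the center of $\mathcal{S}$ lies on the $OZ$-axis, which proves the lemma. I do not anticipate a real obstacle here: the argument is an elementary evaluation plus a symmetry observation, and the only delicate points are the correct formation of the monic polynomial and noting that the nonzero factor $b^{2}-a^{2}$ is harmless precisely because $a\neq b$.
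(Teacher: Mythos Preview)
Your proof is correct and follows essentially the same approach as the paper: a direct evaluation of the characteristic polynomial at $\lambda=-a^{2}$ and $\lambda=-b^{2}$, yielding expressions that vanish exactly when $x_{c}=0$ and $y_{c}=0$ respectively. Your use of the monic polynomial $p$ and the $(a\leftrightarrow b,\ x_{c}\leftrightarrow y_{c})$ symmetry to shortcut the second computation is a nice touch, but the argument is otherwise identical to the paper's one-line substitution.
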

\proof
For $a\neq b$, substitute using expression \eqref{eq:characteristic-polynomial}: $f(-a^2) =x_c^2 (-b^2+ a^2)/b^2$ and $f(-b^2) =-y_c^2 (-b^2+ a^2)/a^2$ and the result follows. 
\qed

In virtue of Lemma~\ref{lemma:-a2y-b2root}, the center of $\mathcal{S}$ is at the $OZ$-axis implies that $\lambda_1=-a^2$ and $\lambda_2=-b^2$ are characteristic roots. In this case, $f(\lambda)=-4(a^{-2}\lambda+1)(b^{-2}\lambda+1)h(\lambda)$ with $h(\lambda)=\lambda^2+4z_c\lambda+4r^2$. Then, denote by $\lambda_3\leq \lambda_4$ the other two roots and observe that
\begin{equation}\label{eq2roots}
\lambda_3=-2(z_c+\sqrt{z_c^2-r^2}) \text{ and } \lambda_4=-2(z_c-\sqrt{z_c^2-r^2}).
\end{equation}

Note that for $i=3,4$
$$\frac{1}{2}\frac{d\lambda_i}{dz_c}=-1+(-1)^i\frac{z_c}{\sqrt{z_c^2-r^2}}.$$
Since $|z_c/\sqrt{z_c^2-r^2}|>1$, the value of $\lambda_3$ decreases and $\lambda_4$ increases (both strictly) as the center of $\mathcal{S}$ ascends through the $OZ$-axis for $z_c>r$. Whereas  $\lambda_3$ increases and $\lambda_4$ decreases as the center of $\mathcal{S}$  ascends through the $OZ$-axis when $z_c<-r$.

The relative positions between $\mathcal{S}$ and $\mathcal{P}$ are summarized in the following lemma.

\begin{lemma}\label{centroejez}
Assume $a\neq b$. Let $\mathcal{S}$ be a sphere centered at $(0,0,z_c)$ and $\mathcal{P}$ a standard elliptic paraboloid, then $\lambda_1=-a^2$, $\lambda_2=-b^2$ and all its relative positions can be characterized as follows:
	\begin{enumerate}
		\item[(i)] $\mathcal{S}$ is interior to $\mathcal{P}$ (Type I) if and only if $\lambda_3<\lambda_4<0$ 
		($\lambda_4\neq -a^2, -b^2$).
		\item[(ii)]  $\mathcal{S}$ is tangent from inside to $\mathcal{P}$ (Type TI) if and only if $-a^2\leq \lambda_3=-2r=\lambda_4<0$.
		\item[(iii)]  $\mathcal{S}$ and $\mathcal{P}$ are in non-tangent contact (Type C) if and only if $\bar\lambda_3=\lambda_4\in\mathbb{C}-\mathbb{R}$.
		\item[(iv)]  $\mathcal{S}$ is tangent from outside to $\mathcal{P}$ (Type TE) if and only if $0<\lambda_3=2r=\lambda_4\leq a^2$.
		\item[(v)]  $\mathcal{S}$ is exterior to $\mathcal{P}$ (Type E) if and only if $0<\lambda_3<2r<\lambda_4$.
	\end{enumerate}
\end{lemma}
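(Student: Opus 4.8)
## Proof Plan

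The plan is to exploit the reduction already in place: when the center of $\mathcal{S}$ lies on the $OZ$-axis, the paraboloid and sphere are both rotationally symmetric about $OZ$, so the contact problem reduces to a planar one — the relative position between the parabola $z = x^2/a^2$ (the section of $\mathcal{P}$) and the circle of radius $r$ centered at $(0,z_c)$ in the $OXZ$-plane. Actually, since $\mathcal{S}$ is a genuine sphere and $\mathcal{P}$ is rotationally symmetric only if we work with the right section, the cleanest route is to intersect directly: substitute $z = x^2/a^2 + y^2/b^2$ into the sphere equation, or better, observe that a point of $\mathcal{P}$ is interior/exterior to $\mathcal{S}$ according to the sign of $(x^2+y^2) + (z-z_c)^2 - r^2$ evaluated on $\mathcal{P}$. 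First I would set up the function $g$ on $\mathcal{P}$ measuring signed distance-squared to the center of $\mathcal{S}$, parametrize $\mathcal{P}$ by $(x,y)$, and study when $\min g < 0$, $= 0$, $> 0$ and whether the minimizer is a tangency.

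The key computation is that $h(\lambda) = \lambda^2 + 4z_c\lambda + 4r^2$ already encodes everything: its roots $\lambda_3, \lambda_4$ are given by \eqref{eq2roots}, and $h$ has real roots exactly when $z_c^2 \geq r^2$, i.e. when the center of $\mathcal{S}$ is far enough from the $XY$-plane. The strategy for each item:
\begin{itemize}
\item[(iii)] Non-tangent contact corresponds precisely to the sphere crossing the paraboloid transversally along a curve; I would show this happens iff $|z_c| < r$, which is exactly $\bar\lambda_3 = \lambda_4 \notin \mathbb{R}$. The ``if'' direction: if $|z_c| < r$ then the vertex $(0,0,0)$ of $\mathcal{P}$ is interior to $\mathcal{S}$ while points with large $x,y$ are exterior, so by continuity there is contact but — using the ``smallness'' hypothesis — no tangency.
\item For $z_c \geq r$ (so $z_c > 0$, both roots $\lambda_3,\lambda_4$ negative): the sphere sits above the vertex. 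I would show the relevant distance function on $\mathcal{P}$, minimized along the $OZ$-axis, has minimum value $(z_c - r)^2$ at the point of $\mathcal{P}$ on the axis only when $r$ is small, and that interior ($g>0$ everywhere on $\mathcal{P}$, i.e. $\mathcal{P}$ stays outside $\mathcal{S}$ meaning $\mathcal{S}$ is interior to $\mathcal{P}$) corresponds to the distance from center to $\mathcal{P}$ exceeding $r$. The boundary case $z_c = r$ (tangency at the vertex) gives the double root $\lambda_3 = \lambda_4 = -2r$, and the ``smallness'' condition $2r \leq a^2$ is exactly what forces $-a^2 \leq -2r$, giving the constraint in (ii).
\item For $z_c \leq -r$ (center below the $XY$-plane): here both $\lambda_3, \lambda_4$ are positive. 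The sphere is ``below'' the paraboloid. Exterior ($\mathcal{S}$ exterior to $\mathcal{P}$) is the generic case $z_c < -r$, giving $0 < \lambda_3 < \lambda_4$ with the tangency case $z_c = -r$ giving $\lambda_3 = \lambda_4 = 2r$, again constrained by smallness to $2r \leq a^2$.
\end{itemize}

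The bookkeeping I would organize around the sign of $z_c - r$ and $z_c + r$ and the identities $\lambda_3 \lambda_4 = 4r^2$ and $\lambda_3 + \lambda_4 = -4z_c$ from $h$: positivity of both roots $\iff z_c < 0$ and $z_c^2 \geq r^2$ $\iff z_c \leq -r$; negativity of both $\iff z_c \geq r$; non-real $\iff |z_c| < r$. Combined with the monotonicity remark just before the lemma (that $\lambda_3$ and $\lambda_4$ move apart as $|z_c|$ grows past $r$), this pins down which configuration corresponds to contact versus separation. The main obstacle I anticipate is the tangency analysis: showing that in the ``double root'' cases (ii) and (iv) the contact is genuinely a tangency and not, say, a higher-order or spurious coincidence, and that under the ``smallness'' hypothesis no other tangency can occur — this is where one must invoke $2r \leq a^2$ carefully, check that the would-be second tangent point (off the axis) is excluded, and verify the tangent-plane condition from the definition. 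The extreme values $\lambda_4 = -2r \geq -a^2$ in (ii) and $\lambda_3 = 2r \leq a^2$ in (iv) should drop out directly from $2r \leq a^2$, but confirming the inequalities are the correct bounds (and not, e.g., involving $b$) requires checking that the critical contact always occurs on the $OZ$-axis, which follows from the rotational picture of the section in the plane containing the minor axis direction.
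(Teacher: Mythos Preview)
Your proposal is correct and follows essentially the same route as the paper: reduce to the quadratic factor $h(\lambda)=\lambda^2+4z_c\lambda+4r^2$, case on the sign of $z_c\pm r$ to read off whether $\lambda_3,\lambda_4$ are real/complex and positive/negative, and invoke the smallness bound $2r\le a^2$ for the inequalities in (ii) and (iv) and for the exclusion $\lambda_4\neq -a^2,-b^2$ in (i) via the monotonicity of $\lambda_4$ in $z_c$. One remark: your opening appeal to ``rotational symmetry'' is a red herring since $a\neq b$, and the tangency obstacle you flag is handled in one line by the paper---any off-axis tangent point would have a mirror image under $x\mapsto -x$ or $y\mapsto -y$, giving two tangent points and contradicting smallness, so the only candidate is the vertex $(0,0,0)$ and tangency occurs iff $|z_c|=r$.
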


\proof
Non tangent contact between the surfaces is characterized by the condition $|z_c|<r$ and this is just the case when the roots $\lambda_3$ and $\lambda_4$ are complex numbers, $\overline\lambda_3=\lambda_4$ as a direct consequence of \eqref{eq2roots}. This shows $(iii)$.

Due to the ``smallness'' hypothesis, the sole possible point of tangency is $(0,0,0)$ so $\mathcal{S}$ and $\mathcal{P}$ are tangent if and only if $|z_c|=r$. In this cases the discriminant  $\sqrt{z_c^2-r^2}$ in \eqref{eq2roots} vanishes and thus $\lambda_3=\lambda_4$. Furthermore, this root is $\lambda_3=\lambda_4=-2r$ if $z_c=r$ and $\lambda_3=\lambda_4=2r$  if $z_c=-r$. This shows $(ii)$ and $(iv)$.

As the center of the sphere is at the $OZ$-axis, $\mathcal{S}$  is interior to $\mathcal{P}$ if and only if $r<z_c$ whereas $\mathcal{S}$  is exterior to $\mathcal{P}$ if and only if $z_c<-r$. In both cases the term $\sqrt{z_c^2-r^2}$ is positive, so $\lambda_3\neq \lambda_4$. Moreover, if $\mathcal{S} $  is interior to $\mathcal{P}$ then from \eqref{eq2roots} we have $\lambda_3<\lambda_4<0$. Similarly, if $\mathcal{S} $  is exterior to $\mathcal{P}$ then $0<\lambda_3<\lambda_4$.

To show that $\lambda_4\neq -a^2,-b^2$ in the interior case, note that if $z_c= r$, then $-b^2<-a^2\leq -2r=\lambda_4$, but $\lambda_4$ is an strictly increasing function on $z_c$ for $z_c>r$. So   $-b^2<-a^2\leq-2r<\lambda_4$  for  $z_c>r$, this is, when $\mathcal{S}$ is interior to $\mathcal{P}$.  
\qed

Note that the case $2r=-a^2$ is a special one, where interior tangency is characterized by a triple root $-a^2$. This is precisely the case in which the maximum curvature of the vertical parabola in the plane $y=0$ equals the curvature of a maximum circumference of the sphere (see Figure~\ref{figure:parabola1}). A double root $-a^2$ does not correspond to a tangent position, but to the interior case (Type I) in this instance.

\section{Characterization of the relative positions}\label{section:relative positions}
In this section we explore the relation between relative positions  and characteristic roots when $a\neq b$. Again, we assume the ``smallness'' hypothesis and use previous notation. We begin by relating the tangent situation with multiple roots, then we distinguish the interior and exterior cases in terms of the sign of real roots and, finally, we associate non-tangent contact with complex roots.
\subsection{Tangency points and multiple roots}\label{subsect:tangency-multiplicity}
A key point in our global analysis is the relation between tangency and multiplicity of roots. 
Generally speaking, tangency is detected by multiple roots of the characteristic polynomial. However, there exist two multiple roots, namely $-a^2$ and $-b^2$, that can be double and are not associated to a tangent position between the surfaces. The following results express this fact. 
In this section we are going to carefully analyze the relation between the multiple roots and the existence of a tangent point between $\mathcal{S}$  and $\mathcal{P}$. First note that if the quadrics are tangent then there exists a multiple root. The proof of the following two results is  analogous to those given in \cite{BPST_Hyperboloid} (see Lemma~26 and Lemma~25, respectively), so we omit them in the interest of brevity.

\begin{lemma}\label{lemma:tangent-implies-multipleroot}
	If $\mathcal{P}$ and $\mathcal{S}$ are tangent, then there exists a multiple real root of the characteristic polynomial.
\end{lemma}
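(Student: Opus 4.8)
The plan is to use the affine invariance already established: since a nonsingular transformation $T$ changes $f$ only by the nonzero factor $\det(T)^2$, and since tangency (as defined in the Definition, in terms of the tangent planes agreeing) is also preserved by $T$, it suffices to prove the statement for $\mathcal{S}$ a sphere of radius $r$ with center $(x_c,y_c,z_c)$ and $\mathcal{P}$ the standard paraboloid of \eqref{eq:paraboloid}, using the explicit characteristic polynomial \eqref{eq:characteristic-polynomial}. So assume $\mathcal{P}$ and $\mathcal{S}$ are tangent at a point $X_0=(x_0,y_0,z_0,1)^t$.

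The first step is to translate tangency into a linear-algebra statement about the pencil $\lambda P + S$. Tangency at $X_0$ means $X_0^t P X_0 = X_0^t S X_0 = 0$ and the tangent planes agree, i.e. $P X_0$ and $S X_0$ are parallel: $P X_0 = \mu S X_0$ for some scalar $\mu$ (the scalar is nonzero and finite because the paraboloid and sphere are smooth at $X_0$, so neither gradient vanishes there). Equivalently, $(P - \mu S) X_0 = 0$, so $\det(P - \mu S) = 0$, which exhibits $\lambda_0 := -1/\mu$ — after rescaling by the factor relating $P-\mu S$ to $\lambda P + S$ — as a root of $f$: indeed $P - \mu S = \mu(\mu^{-1} P - S) = -\mu( (-\mu^{-1}) P + S)\cdot(-1)$, so with $\lambda_0 = -1/\mu$ one has $f(\lambda_0) = \det(\lambda_0 P + S) = 0$. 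This $\lambda_0$ is automatically real since $\mu$ is real.

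The second step is to upgrade "root" to "multiple root." The idea is to differentiate: $f'(\lambda_0) = \operatorname{tr}\big(\operatorname{adj}(\lambda_0 P + S)\, P\big)$, and since $\lambda_0 P + S$ is singular with a one-dimensional kernel spanned by $X_0$ (the kernel is exactly one-dimensional because two distinct smooth quadrics cannot have proportional matrices of rank this low in a higher-dimensional kernel — or one checks the rank is $3$ directly), its adjugate has rank one and equals $c\, X_0 X_0^t$ for some scalar $c$ (using symmetry). Therefore $f'(\lambda_0) = c\, X_0^t P X_0 = 0$, because $X_0$ lies on $\mathcal{P}$. Hence $\lambda_0$ is a root of $f$ of multiplicity at least two, and it is real. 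This is the heart of the argument.

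The main obstacle — and the reason the excerpt defers to \cite{BPST_Hyperboloid} — is the care needed at the degenerate edges of this argument: showing $\operatorname{rank}(\lambda_0 P + S) = 3$ (so that $\operatorname{adj}$ is genuinely rank one and the formula $\operatorname{adj} = c X_0 X_0^t$ holds), handling the possibility $c = 0$, and ruling out the case $\mu = 0$ or $\mu = \infty$ (which would correspond to a tangency "at infinity" or to $X_0$ being a singular point of one quadric, neither of which occurs for a genuine tangency of a sphere and an elliptic paraboloid). Once these non-degeneracy checks are in place — and under the "smallness" hypothesis the tangency point is unique and lies at a finite, smooth point of both surfaces, which makes them routine — the conclusion $f'(\lambda_0)=0$ with $\lambda_0\in\mathbb{R}$ follows as above. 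I would organize the write-up exactly as in the cited reference: establish the rank, write the adjugate as $c X_0 X_0^t$, and evaluate $f$ and $f'$ at $\lambda_0$.
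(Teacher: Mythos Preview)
Your argument is correct and is the standard one; it is almost certainly the argument in \cite{BPST_Hyperboloid} to which the paper defers (the paper itself gives no proof here, only the reference). The two key ingredients---that tangency at $X_0$ forces $PX_0$ and $SX_0$ to be proportional, hence $(\lambda_0 P+S)X_0=0$ for some real $\lambda_0$, and that Jacobi's formula $f'(\lambda_0)=\operatorname{tr}\bigl(\operatorname{adj}(\lambda_0 P+S)\,P\bigr)$ together with $\operatorname{adj}(\lambda_0 P+S)=c\,X_0X_0^t$ gives $f'(\lambda_0)=c\,X_0^tPX_0=0$---are exactly right.

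A few of the ``obstacles'' you flag are not actually obstacles, and trimming them would sharpen the write-up. First, the case $c=0$ is harmless rather than delicate: it occurs precisely when $\operatorname{rank}(\lambda_0 P+S)\le 2$, in which case the adjugate vanishes identically and $f'(\lambda_0)=0$ for free. So you do not need to establish that the rank equals $3$; either way $f'(\lambda_0)=0$. Second, $\mu\in\{0,\infty\}$ is excluded simply because both quadrics are smooth everywhere (a sphere and an elliptic paraboloid have no singular points), so $PX_0\neq 0$ and $SX_0\neq 0$; this requires no appeal to ``smallness.'' Third, the ``smallness'' hypothesis plays no role in this lemma at all---uniqueness of the tangent point is irrelevant to the argument, which uses only the existence of one tangent point. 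With these simplifications your proof is complete as stated.
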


A partial converse of Lemma~\ref{lemma:tangent-implies-multipleroot} is the following.

\begin{lemma}\label{lemma:multipleroottangent}
	Let $\lambda\notin\{ -a^2, -b^2\}$ be a real root of $f(\lambda)$. If the multiplicity of $\lambda$ is $m\geq 2$, then there exists at least one point where $\mathcal{P}$ and $\mathcal{S}$ are tangent.
\end{lemma}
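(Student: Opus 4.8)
The statement to prove is Lemma~\ref{lemma:multipleroottangent}: if $\lambda_0\notin\{-a^2,-b^2\}$ is a real root of $f$ of multiplicity $m\geq 2$, then $\mathcal{P}$ and $\mathcal{S}$ are tangent somewhere. My approach would be the classical pencil argument: a multiple characteristic root corresponds to a degenerate quadric in the pencil $\lambda_0 P + S$ whose singular locus detects the tangency. Concretely, I would first observe that $\det(\lambda_0 P + S)=0$ means the quadric $Q_{\lambda_0}:=\lambda_0 P + S$ is singular, so there is a point $X_0$ (in homogeneous coordinates) with $(\lambda_0 P + S)X_0 = 0$. The key is to upgrade "multiplicity $\geq 2$" into the statement that the rank drops enough, or that $X_0$ can be chosen to be a \emph{real, finite} (non-ideal) point lying on both $\mathcal{P}$ and $\mathcal{S}$; once $X_0$ lies on both quadrics and $(\lambda_0 P+S)X_0=0$, we get $PX_0 = -\lambda_0^{-1} S X_0$ (using $\lambda_0\neq 0$ by Lemma~\ref{lemma:tecnical-results-1}), so the tangent planes $Y^tPX_0=0$ and $Y^tSX_0=0$ coincide, which is exactly the tangency condition in the Definition.

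\textbf{Key steps, in order.} First I would record that $\lambda_0\neq 0$ (Lemma~\ref{lemma:tecnical-results-1}) and that $\lambda_0 P + S$ is singular; let $X_0\in\ker(\lambda_0 P + S)$, $X_0\neq 0$. Second, I would show $X_0$ is a real point: the kernel of a real singular matrix contains a real vector, so this is immediate. Third — the crux — I would show $X_0$ can be taken \emph{finite} (last coordinate nonzero) and \emph{on both quadrics}. The "on both quadrics" part is automatic once $X_0$ is in the kernel and one of them vanishes: if $X_0^t(\lambda_0 P+S)X_0=0$ trivially, then $\lambda_0 X_0^tPX_0 = -X_0^tSX_0$; to conclude $X_0^tPX_0=X_0^tSX_0=0$ I need one more relation, and this is where multiplicity $m\geq 2$ enters: the derivative condition $f'(\lambda_0)=0$ forces the adjugate of $\lambda_0 P+S$ to have rank $\leq 1$ (or at least forces a second independent element in the kernel or a Jordan-block structure), which pins down the extra relation. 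I would use the identity $f'(\lambda) = \operatorname{tr}(\operatorname{adj}(\lambda P+S)\cdot P)$ together with $f(\lambda_0)=0$. The finiteness of $X_0$ is where the hypothesis $\lambda_0\notin\{-a^2,-b^2\}$ is essential: an ideal (infinite) kernel vector would have the form $(\ast,\ast,\ast,0)$, and plugging into $(\lambda_0 P + S)X_0=0$ and examining the first two rows of $\lambda_0 P + S$ — whose diagonal entries are $\lambda_0 a^{-2}+1$ and $\lambda_0 b^{-2}+1$ — forces $\lambda_0\in\{-a^2,-b^2\}$ unless the spatial part of $X_0$ vanishes, contradiction. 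So $X_0$ is finite, hence a genuine point of $\mathbb{R}^3$. Finally, with $X_0$ finite, real, on both $\mathcal{P}$ and $\mathcal{S}$, and $(\lambda_0 P+S)X_0=0$, the tangent-plane equality follows as above, giving tangency at $X_0$.

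\textbf{Main obstacle.} The delicate point is squeezing the \emph{pair} of equations $X_0^tPX_0=0$ and $X_0^tSX_0=0$ out of the single algebraic fact "$\lambda_0$ is a multiple root," rather than just the one linear combination that comes free from $X_0\in\ker(\lambda_0P+S)$. The honest route is to use that $\operatorname{adj}(\lambda_0 P + S)\neq 0$ when the rank is exactly $3$, and that its columns span $\ker(\lambda_0P+S)$; then $f'(\lambda_0)=\operatorname{tr}(\operatorname{adj}(\lambda_0P+S)P)=0$ becomes, after writing $\operatorname{adj}(\lambda_0P+S)=v\,w^t$ for suitable vectors, the scalar identity $w^tPv=0$, and combined with $(\lambda_0P+S)v=0$ this yields $v^tSv = -\lambda_0 v^tPv$ and a second relation forcing both to vanish. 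If instead the rank drops to $2$ or less, the kernel is at least two-dimensional and one can directly select a finite real $X_0$ on the intersection. I would organize the proof by these rank cases, and — since the excerpt says the argument parallels \cite{BPST_Hyperboloid} — I would cite that the bookkeeping is identical to the hyperboloid case, keeping the write-up short; the only genuinely new input here is the elementary observation (via the first two diagonal entries of $\lambda_0 P + S$) that excluding $\lambda_0=-a^2,-b^2$ rules out the ideal-point degeneracy, which is what makes the recovered tangent point an honest finite point of contact.
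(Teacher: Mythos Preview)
Your approach is correct and is precisely the classical pencil argument; the paper does not give its own proof of this lemma but defers to \cite{BPST_Hyperboloid}, and the technique you outline is exactly the one the paper itself uses in the Appendix when proving Theorem~\ref{th:tangency} (finding $X_0$ in $\ker(\lambda_0 P+S)$ and checking $X_0^t S X_0=0$). So there is nothing substantively different to compare.

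Two small refinements are worth making explicit. First, your case split on the rank of $\lambda_0 P+S$ is unnecessary: your own finiteness argument already shows that for $\lambda_0\notin\{-a^2,-b^2\}$ the kernel contains no nonzero vector with vanishing last coordinate, hence the kernel is exactly one--dimensional and the rank is always~$3$. Second, in the rank~$3$ step you write $\operatorname{adj}(\lambda_0 P+S)=v\,w^t$ and obtain $w^tPv=0$ from $f'(\lambda_0)=0$, but as stated this does not yet give $v^tPv=0$. The missing observation is that $\lambda_0 P+S$ is \emph{symmetric}, so its adjugate is symmetric and of rank~$1$, forcing $\operatorname{adj}(\lambda_0P+S)=c\,v\,v^t$ with $c\neq 0$ and $v$ spanning the kernel; then $f'(\lambda_0)=c\,v^tPv=0$ gives $v^tPv=0$ directly, and $v^tSv=-\lambda_0 v^tPv=0$ follows. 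With these two points made explicit your proof is complete and clean.
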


\begin{remark}\rm
	The roots $-a^2$ and $-b^2$ are indeed special in Lemma~\ref{lemma:multipleroottangent}. The following example shows how $-a^2$ can be a double root and there is no tangency between $\mathcal{S}$  and $\mathcal{P}$:
	\[
	\begin{array}{c}
	\mathcal{P}: \frac{x^2}{1,2^2}+\frac{y^2}{1,5^2}-z=0,\qquad\mathcal{S}:  x^2+(y-0,5)^2+(z-0,712045)^2=0,25^2.\\
	\noalign{\medskip}
	{\bf Roots:}\quad \lambda_1=-3,54808,\quad \lambda_2=\lambda_3= -1,44, \quad\lambda_4=-0,110095.
	\end{array}
	\]
	
	\begin{figure}[h]
		\includegraphics[width=4cm]{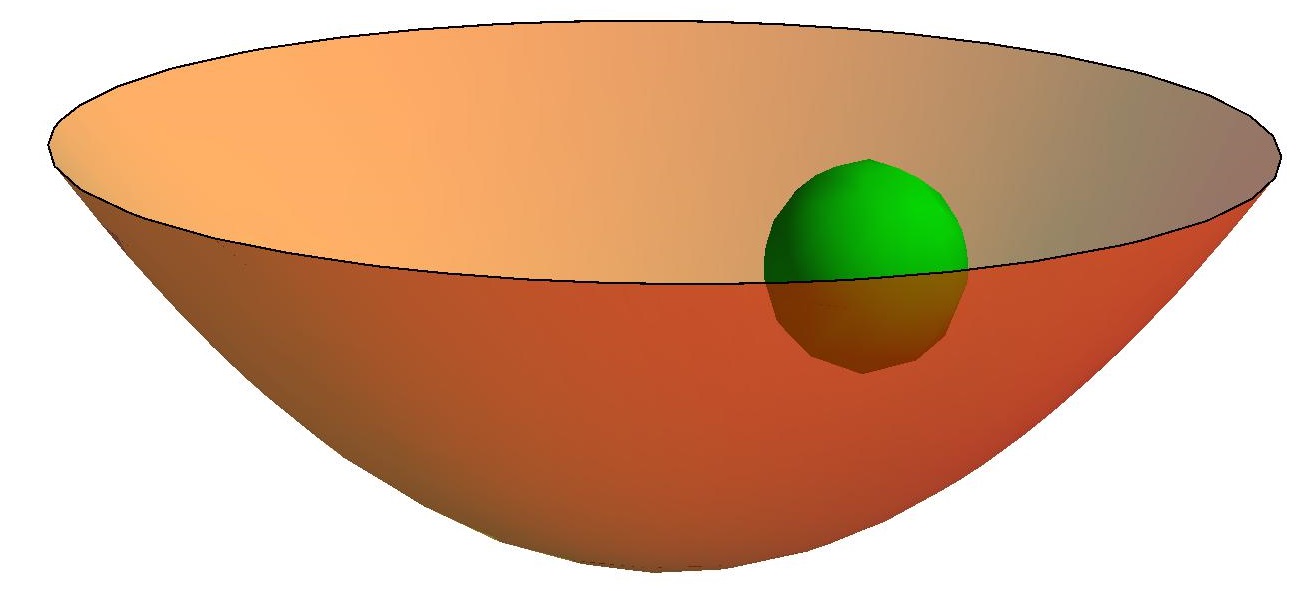}
		\caption{A double root $-a^2$ does not imply that $\mathcal{S}$  and $\mathcal{P}$ are tangent.}
	\end{figure}	
\end{remark}

The following result summarizes the relation between tangency and multiple roots. We leave the proof for the Appendix so that we do not break the flow of the global argument.

\begin{theorem}\rm\label{th:tangency}
	Assume $a\neq b$. Then	$\mathcal{P}$ and $\mathcal{S}$ are tangent if and only if one of the following possibilities holds:
	\begin{enumerate}
		\item    $-a^2$ is a triple root.  
		\item $\lambda\in \mathbb{R}\backslash\{ -a^2, -b^2\}$  is a multiple root.
	\end{enumerate}
\end{theorem}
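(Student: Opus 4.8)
The plan is to prove Theorem~\ref{th:tangency} by combining the partial results already assembled and then handling the two exceptional roots $-a^2$, $-b^2$ by a direct computation.  First I would establish the ``if'' direction, which is the easy half.  If $\lambda\in\mathbb{R}\setminus\{-a^2,-b^2\}$ is a multiple root, then Lemma~\ref{lemma:multipleroottangent} immediately gives a tangency point.  If $-a^2$ is a triple root, one must argue separately: here the smallness hypothesis ($2r\leq a^2$) is essential, since it forces the unique potential tangency to occur near the vertex along the plane $y=0$, and the analysis of Section~\ref{section:oz-axis} (see the remark following Lemma~\ref{centroejez}, where $2r=-a^2$ is discussed) shows that a triple root at $-a^2$ corresponds exactly to $z_c=r$ with the center on the $OZ$-axis, i.e.\ interior tangency.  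I would make this precise by using Lemma~\ref{lemma:-a2y-b2root}: a triple root at $-a^2$ forces $-a^2$ and $-b^2$ to be among the roots (here I must be slightly careful because $a\neq b$, so $-b^2$ being a root is what Lemma~\ref{lemma:-a2y-b2root} detects once $-a^2$ has high multiplicity), hence $x_c=y_c=0$, and then $h(\lambda)=\lambda^2+4z_c\lambda+4r^2$ must have $-a^2$ as a root, which together with $2r=a^2$ (forced by triple multiplicity combined with smallness) gives $z_c=r$ and tangency at the vertex.

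Next I would prove the ``only if'' direction.  Suppose $\mathcal{P}$ and $\mathcal{S}$ are tangent.  By Lemma~\ref{lemma:tangent-implies-multipleroot} there is a multiple real root $\lambda_0$.  If $\lambda_0\notin\{-a^2,-b^2\}$ we are in case~(2) and done.  The remaining work is to rule out the possibility that the tangency is witnessed \emph{only} by a double root equal to $-a^2$ or $-b^2$ (with no other multiple root and without $-a^2$ being triple).  This is the crux of the argument and the step I expect to be the main obstacle.  The strategy is: assume $-a^2$ (the case $-b^2$ is symmetric, modulo the orientation coming from $0<a\leq b$) is a double root, not triple, and no other root is multiple; I want to derive a contradiction with tangency unless in fact $-a^2$ is triple.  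Using $f(-a^2)=x_c^2(a^2-b^2)/b^2$ from the proof of Lemma~\ref{lemma:-a2y-b2root}, the condition that $-a^2$ is a root forces $x_c=0$ (since $a\neq b$).  Then I would compute $f'(-a^2)$ with $x_c=0$ substituted; requiring $-a^2$ to be (at least) a double root gives a second equation relating $y_c, z_c, r, a, b$.  With these two constraints I would then write down explicitly the tangency condition at the candidate contact point and show it fails — or rather, show that the only way tangency can hold simultaneously is if the third derivative also vanishes, i.e.\ $-a^2$ is actually a triple root, landing in case~(1).

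The concrete mechanism for that last contradiction is to exploit the factorization available once $x_c=0$.  When $x_c=0$, one can check from \eqref{eq:characteristic-polynomial} that $f$ is divisible by $(a^{-2}\lambda+1)$, and the quotient is a cubic whose coefficients depend on $y_c,z_c,r,a,b$; $-a^2$ being a double root of $f$ means $-a^2$ is a root of this cubic.  Meanwhile, geometrically, $x_c=0$ means the center of $\mathcal{S}$ lies in the plane $x=0$, and tangency in that plane reduces to a planar problem (circle vs.\ the parabola $y^2/b^2=z$) to which the smallness hypothesis $2r\leq a^2\leq b^2$ applies; the planar analysis shows the circle meets that parabola tangentially only at the vertex, forcing $y_c=0$ as well, hence the center is on the $OZ$-axis and Lemma~\ref{centroejez} together with the remark after it pins down the multiplicity structure — a double (non-triple) root at $-a^2$ there corresponds to Type~I, \emph{not} tangency, giving the contradiction; tangency in this configuration forces the triple root of case~(1).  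Finally I would remark that the hypothesis $a\neq b$ is used throughout (it makes $-a^2\neq -b^2$ and activates Lemma~\ref{lemma:-a2y-b2root}), and note that the full theorem covering $a=b$ will follow later by the limiting/degenerate argument promised in Section~\ref{section:classification}.
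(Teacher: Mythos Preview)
Your ``only if'' direction has a genuine gap.  After reducing to $x_c=0$ (correct, via $f(-a^2)=x_c^2(a^2-b^2)/b^2$) you assert that ``the planar analysis shows the circle meets that parabola tangentially only at the vertex, forcing $y_c=0$.''  This is false: the smallness inequality $2r\le a^2\le b^2$ only guarantees that the circle cannot be tangent to the parabola $z=y^2/b^2$ at \emph{two} points; it does not restrict tangency to the vertex.  A circle of radius $r$ can be tangent to that parabola at any point with $y\neq 0$ (hence $y_c\neq 0$) whenever $1/r$ exceeds the curvature of the parabola there, which is automatic under smallness.  So the planar reduction by itself does not yield $y_c=0$; you must also bring in the algebraic double--root constraint, and the argument you sketched does not do this.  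The paper's route is different and more direct: one solves the eigenvector equation $(-a^2P+S)X=0$ and finds that any tangent point associated to the root $-a^2$ has $z=z_c-a^2/2$; substituting into the sphere equation gives $r^2=x^2+(y-y_c)^2+a^4/4\ge a^4/4$, and smallness then forces $r=a^2/2$, $x=0$, $y=y_c$, hence (from the eigenvector form) $y_c=0$, making $-a^2$ a triple root.

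Your ``if'' direction for the triple root is also not correctly argued.  You invoke Lemma~\ref{lemma:-a2y-b2root} to conclude that a triple root at $-a^2$ forces $-b^2$ to be a root as well, hence $x_c=y_c=0$; but that lemma gives $-a^2$ a root $\Leftrightarrow x_c=0$ and $-b^2$ a root $\Leftrightarrow y_c=0$ independently, so a triple root at $-a^2$ alone says nothing about $-b^2$ or $y_c$.  You then use the factor $h(\lambda)=\lambda^2+4z_c\lambda+4r^2$, but that factorization is only available once you already know $x_c=y_c=0$.  The paper instead proves a separate lemma (Lemma~\ref{lemma:a2-multipleroot}) computing explicitly when $-a^2$ is a simple, double, or triple root; the triple case is characterized by $x_c=y_c=0$, $z_c=r=a^2/2$, from which tangency at the vertex is immediate.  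Finally, the $-b^2$ case is \emph{not} symmetric: the analogous triple--root condition would require $r=b^2/2>a^2/2\ge r$, which is impossible, so $-b^2$ is never triple and only the double case (no associated tangency, by the eigenvector argument) needs to be excluded.
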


\subsection{The interior and exterior cases}
We have already proved the characterization of relative positions when  the center of $\mathcal{S}$  is in the $OZ$-axis (in the case $a\neq b$). To extend it to the remaining space we are going to move the center of the sphere along a path from any point to an appropriate point in the $OZ$-axis.
Denote this path by $\alpha(t)$, with $t\in [t_0,t_1]$. Define $f_t(\lambda)=\operatorname{det}(\lambda P+S(t))$ as the characteristic polynomial $f(\lambda)$ for $\mathcal{S}$ centered at $\alpha(t)$. 

\begin{lemma}\label{lemma:complexroots}
Let $p(t)=\alpha_n(t) \lambda^n+\alpha_{n-1}(t) \lambda^{n-1}+\dots +\alpha_1(t) \lambda+ \alpha_0(t)$ be a polynomial of degree $n$ whose coefficients depend continuously on a parameter $t$. Assume $p(t_0)$ has $n$ distinct real roots and that $p(t_1)$ has some complex (non real) root. Then there exists $t_d\in(t_0,t_1)$ so that $p(t_d)$ has a multiple real root.
\end{lemma}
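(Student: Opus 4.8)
The plan is to track the roots of $p(t)$ as $t$ varies from $t_0$ to $t_1$ and argue that a complex root can only arise after two real roots have collided. The underlying principle is that the roots of a polynomial depend continuously on its coefficients (in the appropriate sense: the multiset of roots, counted with multiplicity, varies continuously in the Hausdorff metric), so we must find the precise moment where the structure of the root set changes. First I would let
\[
t^* = \inf\{\, t\in[t_0,t_1] : p(t) \text{ has a non-real root} \,\}.
\]
Since $p(t_1)$ has a non-real root, this set is non-empty, and since $p(t_0)$ has $n$ distinct real roots, a short argument (using that having $n$ distinct real roots is an open condition — the discriminant is a continuous function of the coefficients and is nonzero at $t_0$) shows $t^* > t_0$. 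The goal is then to prove that $t_d := t^*$ works, i.e.\ that $p(t^*)$ has a multiple real root.

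The key step is to analyze $p(t^*)$ itself. By continuity of the coefficients and of the roots, and because for $t<t^*$ (close to $t^*$) all roots of $p(t)$ are real, the roots of $p(t^*)$ are all real as well (a limit of real numbers is real; non-real roots come in conjugate pairs, so the number of non-real roots is even and lower-semicontinuity of that count together with its vanishing on $[t_0,t^*)$ forces it to be $0$ at $t^*$ unless roots merge). On the other hand, $p(t^*)$ cannot have $n$ distinct real roots: if it did, then by the openness argument above $p(t)$ would have $n$ distinct real roots for all $t$ in a neighborhood of $t^*$, contradicting the definition of $t^*$ as an infimum of parameters with a non-real root. Hence $p(t^*)$ has all real roots but fewer than $n$ distinct ones, so it has a multiple real root, as desired.

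The main obstacle — and the point needing the most care — is the continuity statement for roots: one must justify that as $t\to t^*$ from the left, no root can "escape to infinity" or "appear from nowhere," and that the limiting roots of $p(t^*)$ are exactly the limits of the roots of $p(t)$. This is immediate if $\alpha_n(t^*)\neq 0$ (so the degree does not drop); in our application $p = f_t$ has leading coefficient $-\tfrac{1}{4a^2b^2}$, which is a nonzero constant independent of $t$, so the degree never drops and the roots stay in a bounded region. I would phrase the continuity argument via the elementary symmetric functions / the resultant, or simply cite the standard fact (e.g.\ \cite{Irving}) that the roots of a monic polynomial depend continuously on its coefficients; after normalizing $p(t)$ to be monic (legitimate since $\alpha_n(t)$ is bounded away from $0$), this makes the limiting argument rigorous. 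With that in hand, the rest is the soft topological reasoning sketched above.
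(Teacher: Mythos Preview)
Your argument is correct and shares the paper's starting point: both define $t_d$ as the infimum of the set of parameters at which a non-real root appears, and both then argue that $p(t_d)$ must have a repeated real root. The paper finishes differently, however: rather than invoking openness of the condition ``$n$ distinct real roots'' together with global continuity of the root multiset, it isolates for $t$ near $t_d$ a continuous quadratic factor $\alpha_2(t)\lambda^2+\alpha_1(t)\lambda+\alpha_0(t)$ carrying the two roots that are about to become complex, and applies the intermediate value theorem to its discriminant $\alpha_1(t)^2-4\alpha_0(t)\alpha_2(t)$, which is positive at $t_d-\varepsilon$ and negative at $t_d+\varepsilon$. Your route is arguably cleaner, since it avoids having to justify the existence of such a local factorization with continuous coefficients (which the paper simply asserts); the paper's route is more elementary in that it only needs the quadratic formula and IVT rather than the abstract continuous-dependence-of-roots statement. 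One small quibble: nonvanishing of the discriminant alone gives \emph{distinct} roots, not \emph{real} distinct roots, so your openness claim really rests on continuity of the roots (which you invoke anyway) rather than on the discriminant by itself.
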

\begin{proof}
Let $t_d=\operatorname{Inf}\{t\in[t_0,t_1]/p(t)\text{ has a non real root}\}$. Note that $t_d\in (t_0,t_1)$. There exist $\varepsilon>0$ small enough so that $p(t_d-\varepsilon)$ has real roots and $p(t_d+\varepsilon)$ has some non real root. For $\varepsilon$ small enough, $p(t)$ factorizes as \[p(t)=(\alpha_2(t) \lambda^2+\alpha_1(t) \lambda+\alpha_0(t)) s(t) \text{ in } [t_d-\varepsilon,t_d+\varepsilon],\] where the second degree polynomial $\alpha_2(t) \lambda^2+\alpha_1(t) \lambda+\alpha_0(t)$ has complex conjugate roots for $t=t_d+\varepsilon$. The functions $\alpha_i$ are continuous functions for $i=0,1,2$, so the discriminant $d(t)=\alpha_1(t)^2-4\alpha_0(t)\alpha_2(t)$ is also a continuous function. Since $d(t_d-\varepsilon)>0$ and $d(t_d+\varepsilon)<0$, it follows that $d(t_d)=0$, so there is a double root for $\alpha_2(t) \lambda^2+\alpha_1(t) \lambda+\alpha_0(t)$ and a root of multiplicity at least two for $p(t_d)$. 
 \end{proof}

\begin{lemma}\label{lemma:non-contacto}
Assume $a\neq b$. Suppose that there is no contact between $\mathcal{P}$ and $\mathcal{S}$. Then $f(\lambda)$ has four real roots, all of them with multiplicity $1$, except $-a^2$ and $-b^2$ that could possibly have multiplicity $2$. Moreover: 
\begin{enumerate}
  \item If $\mathcal{S}$ is interior to $\mathcal{P}$, then all the roots are negative.
  \item If $\mathcal{S}$ is exterior to $\mathcal{P}$, two roots are positive and two are negative and different.
\end{enumerate}
\end{lemma}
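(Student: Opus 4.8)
The plan is to exploit the path‑continuity of the characteristic roots together with the case already settled on the $OZ$‑axis (Lemma~\ref{centroejez}) and the tangency criterion (Theorem~\ref{th:tangency}). Suppose there is no contact between $\mathcal{P}$ and $\mathcal{S}$. First I would connect the given center $(x_c,y_c,z_c)$ to a point of the $OZ$‑axis by a path $\alpha(t)$, $t\in[t_0,t_1]$, that stays entirely in the non‑contact region of its type: if $\mathcal{S}$ is interior, move the center along a path inside $\mathcal{P}$ to a point $(0,0,z_c')$ with $z_c'>r$; if $\mathcal{S}$ is exterior, move it along a path outside $\mathcal{P}$ to a point $(0,0,z_c')$ with $z_c'<-r$. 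Such paths exist because the interior and exterior of the paraboloid are each (path‑)connected open sets, and translating the center never changes the radius, so the non‑contact (interior or exterior) status is preserved throughout. At the endpoint, Lemma~\ref{centroejez} gives us either four negative roots (interior: $\lambda_1=-a^2$, $\lambda_2=-b^2$, $\lambda_3<\lambda_4<0$) or two positive and two negative roots (exterior: $\lambda_1=-a^2$, $\lambda_2=-b^2$, $0<\lambda_3<\lambda_4$), in both cases with no root of multiplicity exceeding what is allowed.

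Next I would show that along $\alpha(t)$ all roots of $f_t(\lambda)$ stay real. The key is that complexification of roots requires, at the first parameter where it happens, a real root of multiplicity at least two; this is exactly Lemma~\ref{lemma:complexroots} applied to the monic version of $f_t$ (whose coefficients, by \eqref{eq:characteristic-polynomial}, are polynomials, hence continuous, in the center coordinates and therefore in $t$). So if a complex root appeared somewhere on the path, there would be an intermediate $t_d$ with a real multiple root $\mu$ of $f_{t_d}$. I would then use Theorem~\ref{th:tangency}: either $\mu\notin\{-a^2,-b^2\}$, in which case Lemma~\ref{lemma:multipleroottangent} forces a tangency between $\mathcal{P}$ and $\mathcal{S}$ at $\alpha(t_d)$ — contradicting that $\alpha(t_d)$ lies in the (open) interior or exterior region; or $-a^2$ is a triple root at $t_d$, which by Theorem~\ref{th:tangency} again means tangency, the same contradiction. (The possibility that $-a^2$ or $-b^2$ is merely a double root is not excluded and is harmless: it does not create a complex root and, being the two exceptional values, is exactly the exception allowed in the statement.) Hence no complex roots ever appear: $f(\lambda)$ has four real roots for the original center, and the only roots permitted to be double are $-a^2$ and $-b^2$.

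It remains to pin down the signs. Along the path the product of all four roots is $4a^2b^2r^2>0$ by Lemma~\ref{lemma:tecnical-results-1}(2), and $0$ is never a root; since the roots are real and vary continuously, no root can cross $0$, so the number of negative roots is constant along $\alpha(t)$. Reading off the endpoint configurations from Lemma~\ref{centroejez} then gives the conclusion: four negative roots in the interior case, and exactly two negative and two positive in the exterior case. In the exterior case I would additionally note that the two positive roots are distinct: if they coincided at the original center, that common value $\mu>0$ would be a multiple root different from $-a^2,-b^2$ (both negative), so Theorem~\ref{th:tangency}/Lemma~\ref{lemma:multipleroottangent} would again force tangency, contradicting exteriority; likewise the two negative roots cannot both equal, say, $-a^2$ with extra coincidences beyond the allowed double, which is already covered.

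The main obstacle is the geometric bookkeeping in the first paragraph: one must be sure that the chosen path keeps the moving sphere strictly interior (resp. strictly exterior) to the paraboloid at every $t$, so that Lemma~\ref{lemma:multipleroottangent} can be invoked to rule out tangency at any $t_d$. Concretely, ``$\mathcal{S}$ interior to $\mathcal{P}$'' means every point of the sphere of radius $r$ about the center satisfies $X^tPX<0$; the set of admissible centers is $\{c : \operatorname{dist}(c,\partial\Omega)>r,\ c\in\Omega\}$ where $\Omega$ is the open interior of $\mathcal{P}$, and one needs this set to be path‑connected and to contain a point of the $OZ$‑axis with $z_c'>r$. Since $\Omega$ is convex (the interior of an elliptic paraboloid is convex) and unbounded upward along $OZ$, this admissible set is nonempty, open and convex, hence path‑connected, and straightforwardly contains $(0,0,z_c')$ for $z_c'$ large; the exterior case is handled symmetrically using that the exterior of $\mathcal{P}$ is connected and the admissible set of centers there is connected as well. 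Once this connectivity is in hand, the rest is the continuity‑plus‑Theorem~\ref{th:tangency} argument above, which is routine.
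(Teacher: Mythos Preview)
Your strategy is exactly the paper's: connect the given center to a point on the $OZ$-axis by a path along which the sphere keeps its (interior or exterior) status, invoke Lemma~\ref{centroejez} at the endpoint, and use continuity of the roots together with Lemma~\ref{lemma:complexroots}, Lemma~\ref{lemma:tecnical-results-1} and Theorem~\ref{th:tangency} to transport the root configuration back. Your convexity argument for the interior admissible set of centers, and the connectedness argument for the exterior one, are cleaner than the paper's explicit paths and are fine.

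There is, however, one genuine soft spot: your parenthetical claim that a double root $-a^2$ or $-b^2$ at the critical parameter $t_d$ is ``harmless: it does not create a complex root''. This is exactly what Lemma~\ref{lemma:complexroots} does \emph{not} say --- the multiple root produced at $t_d$ is precisely the one that bifurcates into a complex pair, so you cannot simply dismiss this case. The paper closes this gap not by arguing that such a double root is innocuous, but by \emph{arranging that it cannot occur on the open path}: when $x_c\neq 0$ and $y_c\neq 0$ it uses the straight segment $\alpha(t)=((1-t)x_c,(1-t)y_c,z_c)$, which stays off the planes $x=0$ and $y=0$ for $t\in[0,1)$, so by Lemma~\ref{lemma:-a2y-b2root} neither $-a^2$ nor $-b^2$ is ever a root along the interior of the path and any multiple root at $t_d$ must lie outside $\{-a^2,-b^2\}$, forcing tangency. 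When $x_c=0$ (resp.\ $y_c=0$) the paper instead factors $f(\lambda)=(\lambda+a^2)g(\lambda)$ and reruns the continuity argument on the cubic $g$, where a multiple root of $g$ is either $-a^2$ (making $-a^2$ a \emph{triple} root of $f$, hence tangency by Theorem~\ref{th:tangency}) or a value $\neq -a^2,-b^2$ (again tangency). Your convex straight-line paths have exactly the same avoidance property in the generic case, so the fix is immediate; but you should make the case split $x_c=0$ / $y_c=0$ explicit and do the factoring there rather than relying on the unproved ``harmless'' assertion.
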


\proof
First, let us consider a sphere $\mathcal{S}$ which is interior to the paraboloid. If the center of $\mathcal{S}$ is at the $OZ$-axis, because of Lemma~\ref{centroejez}, the roots are $\lambda_1=-a^2$, $\lambda_2=-b^2$ and $\lambda_3<\lambda_4<0$. If $x_c\neq 0$ and $y_c\neq 0$, we can construct the path $\alpha(t)=((1-t) x_c, (1-t) y_c, z_c)$ with $t\in [0,1]$ (see Figure~\ref{figure:paths}~(a)). This path does not intersect the planes $x=0$ or $y=0$, so $-a^2$ and $-b^2$ are not roots of $f_t(\lambda)$ for $t\in[0,1)$ (see Section~\ref{section:appendix-multiplerootsandtangency}). Since $\mathcal{S}$  is not tangent to $\mathcal{P}$ at any point of the path, there are no multiple roots in virtue of Lemma~\ref{lemma:multipleroottangent}. Moreover, since the roots in $\alpha(1)$ are all real and negative, since $0$ is not a root (see Lemma~\ref{lemma:tecnical-results-1}), and since Lemma~\ref{lemma:complexroots} does not allow complex roots, the roots of $f_0(\lambda)$ are also real and negative. 

Assume $x_c=0$ and $y_c\neq 0$. Then $-a^2$ is a root. Consider the path $\alpha(t)=(0,(1-t)y_c,z_c)$ and note that $\alpha$ is contained in the plane $x=0$, so $-a^2$ is always a root when the center of the sphere moves along $\alpha$. Therefore, the characteristic polynomial decomposes as $f(\lambda)=(\lambda+a^2)g(\lambda)$. We use Lemma~\ref{lemma:tecnical-results-1}, Lemma~\ref{centroejez} and Lemma~\ref{lemma:complexroots} as before, together with Theorem~\ref{th:tangency} to conclude that all the roots are negative in $\alpha(1)$. Note that $g(\lambda)$ does not have any double root, but $-a^2$ could be a root of $g(\lambda)$ ($-a^2$ and $-b^2$ are the only possible double roots without an associated tangency, see Theorem~\ref{th:tangency}). In that case $\mathcal{S}$  is interior to $\mathcal{P}$ and $-a^2$ is a double root of $f(\lambda)$. The argument is similar if $y_c=0$ and $x_c\neq 0$. Thus assertion (1) follows.

To prove assertion (2) we construct  a path so that the sphere moves without  intersecting the paraboloid. For example, for a general $\mathcal{S}$  and $\mathcal{P}$, the sum of the paths $\alpha(t)=(x_c,y_c,z_c-t(|z_c|+2r))$, $t\in[0,1]$, and $\beta(t)=((1-t)x_c,(1-t)y_c,z_c-|z_c|-2r)$,  $t\in[0,1]$, ensures that there is not contact between the surfaces and the center at the end of the path is located in the negative part of the $OZ$-axis (see Figure~\ref{figure:paths}~(a)). Now, a similar argument to that given in assertion (1) applies to prove assertion (2).
\qed

\begin{figure}[h]
	\begin{tabular}{cc}
	\includegraphics[width=5cm]{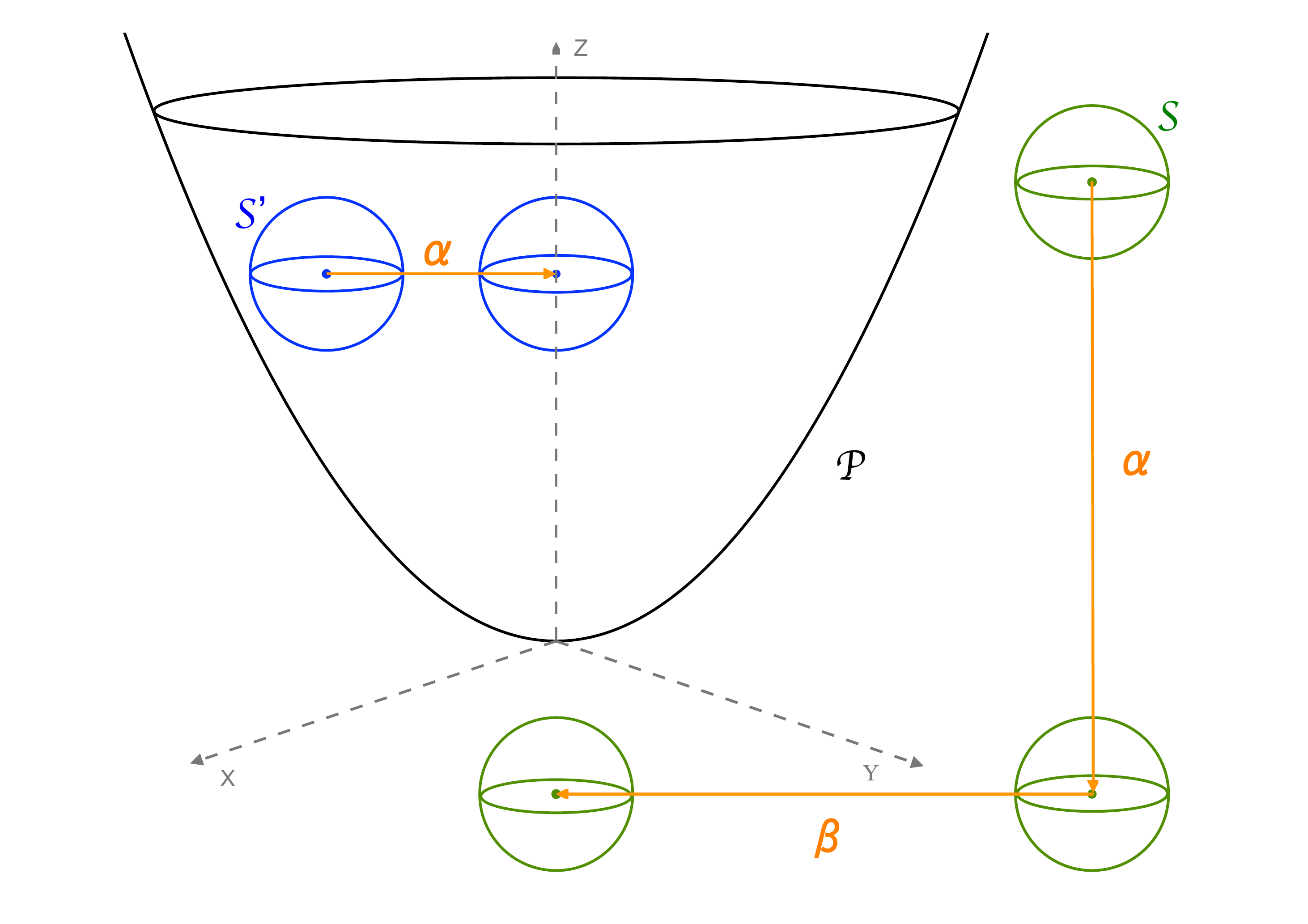} &
	\includegraphics[width=5cm]{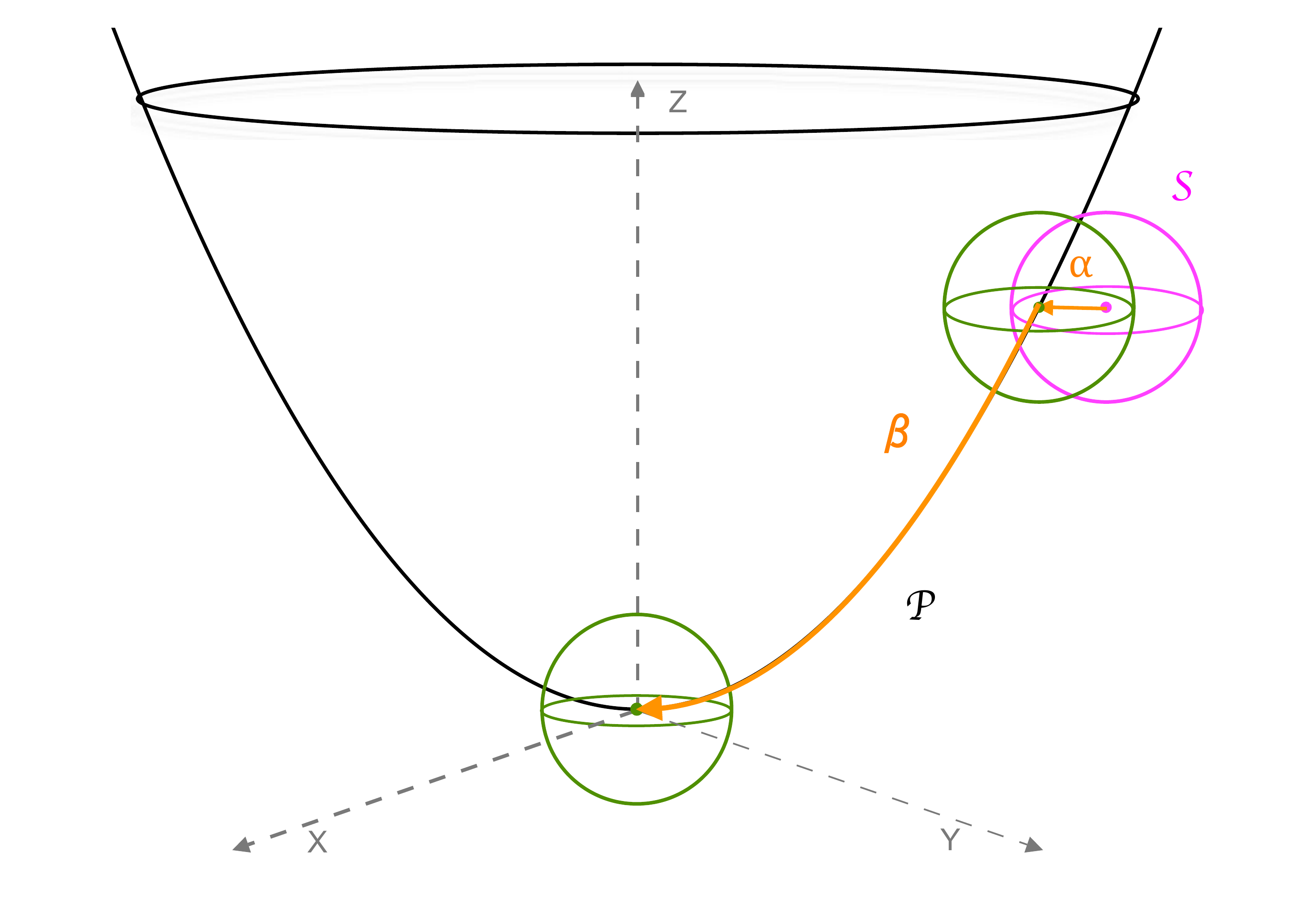} \\
	(a)& (b)
	\end{tabular}
	\caption{Paths translating the center of the sphere from the initial position to the $OZ$-axis with the same relative position all along the path.}\label{figure:paths}
\end{figure}

\subsection{The non-tangent contact case}
\begin{lemma}\label{lemma:contacto-complexas}
Assume $a\neq b$. If $\mathcal{S}$ and $\mathcal{P}$ are in non-tangent contact, then $f(\lambda)$ has a pair of complex conjugate roots and two different real roots.
\end{lemma}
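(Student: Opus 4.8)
The plan is to transport the root configuration from the already understood case in which the center of $\mathcal{S}$ lies on the $OZ$-axis (Lemma~\ref{centroejez}(iii)) to an arbitrary non-tangent-contact configuration by a continuity argument along a path of spheres, exactly in the spirit of the proofs in Section~\ref{section:relative positions}.

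First I would set up a convenient reference configuration. Non-tangent contact is an open condition on the center of $\mathcal{S}$: under the smallness hypothesis $\mathcal{S}\cap\mathcal{P}$ is a single compact transversal curve, and both its non-emptiness and the transversality persist under a small displacement of the center (smallness itself depends only on $r$ and $a$, hence is preserved). In particular the sphere centered at $(0,0,z^*)$ with $|z^*|<r$ is in non-tangent contact by Lemma~\ref{centroejez}(iii), hence so is the sphere centered at $(\varepsilon,\varepsilon,z^*)$ for $\varepsilon>0$ small. For the former, Lemma~\ref{centroejez}(iii) gives characteristic roots $-a^2$, $-b^2$ (real and distinct, since $a\neq b$) together with a conjugate pair $\lambda_3=\overline{\lambda_4}\notin\mathbb{R}$, so exactly two non-real roots; since the roots of $f$ depend continuously on the center, the same is true for $(\varepsilon,\varepsilon,z^*)$, whose center now lies off the planes $x=0$ and $y=0$. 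This is the reference configuration.

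Now let $\mathcal{S}$ be in non-tangent contact with $\mathcal{P}$. Using the symmetries $x\mapsto -x$, $y\mapsto -y$ of $\mathcal{P}$ we may assume $x_c,y_c\geq 0$. If $x_c>0$ and $y_c>0$, join the center of $\mathcal{S}$ to $(\varepsilon,\varepsilon,z^*)$ by a path $\alpha(t)$, $t\in[0,1]$, of centers that remain in non-tangent contact with $\mathcal{P}$ and stay off the planes $x=0$ and $y=0$. Along $\alpha$ neither $-a^2$ nor $-b^2$ is a characteristic root (see the proof of Lemma~\ref{lemma:-a2y-b2root}) and the surfaces are never tangent, so Lemma~\ref{lemma:multipleroottangent} forbids any multiple root of $f_t(\lambda)=\operatorname{det}(\lambda P+S(t))$ for every $t$. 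By Lemma~\ref{lemma:complexroots} together with the continuity of the roots, the number of non-real roots of $f_t$ is then constant along $\alpha$; being $2$ at the reference configuration, it is $2$ at $\mathcal{S}$. Hence $f$ has a conjugate pair of non-real roots and two real roots, and these two real roots are distinct because $f$ has no multiple root. The cases $x_c=0$ or $y_c=0$ are handled exactly as in the proof of Lemma~\ref{lemma:non-contacto}: one factors $\lambda+a^2$ and/or $\lambda+b^2$ out of $f$ and runs the same argument for the remaining cubic while moving the center inside the plane $x=0$ and/or $y=0$ (using Theorem~\ref{th:tangency}(1) to exclude a triple root $-a^2$), and the case $x_c=y_c=0$ is Lemma~\ref{centroejez}(iii) itself.

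The main obstacle is the existence of the connecting path $\alpha$: one must know that the set of centers producing a non-tangent-contact configuration is connected enough to reach a neighbourhood of the $OZ$-axis while staying off the coordinate planes (respectively, inside a coordinate plane, in the degenerate cases). This is precisely where the smallness hypothesis is used in an essential way, since it is what keeps the intersection curve connected and thereby prevents the non-tangent-contact region of centers from breaking apart; it is also the step that requires the careful case analysis already carried out in the proof of Lemma~\ref{lemma:non-contacto}.
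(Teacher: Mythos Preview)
Your overall plan coincides with the paper's: transport the root configuration from the $OZ$-axis case via a continuity argument along a path of centers, using Lemma~\ref{lemma:multipleroottangent} (together with avoidance of the exceptional roots $-a^2$, $-b^2$) to forbid multiple roots along the way and Lemma~\ref{lemma:complexroots} to keep the number of non-real roots constant. Your handling of the degenerate cases $x_c=0$ or $y_c=0$ by factoring out $\lambda+a^2$ or $\lambda+b^2$ is also the paper's.

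The genuine gap is the one you flag yourself: you never construct the path $\alpha$. Openness of non-tangent contact justifies the perturbed reference point $(\varepsilon,\varepsilon,z^*)$, but it does not produce a path from an arbitrary non-tangent-contact center back to that reference point, and your closing appeal to ``the careful case analysis already carried out in the proof of Lemma~\ref{lemma:non-contacto}'' does not help, since those paths are built for spheres with \emph{no} contact and would typically leave the non-tangent-contact region. The paper supplies exactly this missing ingredient with an explicit geometric construction: from $(x_c,y_c,z_c)$ first move the center horizontally along the half-line from the $OZ$-axis through $(x_c,y_c,z_c)$ until the center lies \emph{on} $\mathcal{P}$, then slide the center along the parabolic arc cut on $\mathcal{P}$ by the vertical plane through the axis and the center, down to the vertex $(0,0,0)$. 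Along both pieces the sphere remains in non-tangent contact, and for $x_c\neq 0\neq y_c$ the path avoids the planes $x=0$ and $y=0$ until the endpoint. This explicit path \emph{is} the substance of the proof; without it your argument is a correct outline but not yet a proof.
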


\proof
Assume $\mathcal{S}$ and $\mathcal{P}$ are in non-tangent contact. Then we can build a path from $(x_c,y_c,z_c)$ to $(0,0,0)$ to move the center of $\mathcal{S}$  along it in such a way that all along the path $\mathcal{S}$ and $\mathcal{P}$ are in non-tangent contact. We do it in two steps as follows. Let $(x_1,y_1,z_c)$ be the intersection point between the horizontal half-ray starting at the $OZ$-axis trough $(x_c,y_c,z_c)$. We consider the horizontal path $\alpha(t)=((1-t)x_c+t x_1,(1-t)y_c+t y_1, z_c)$, $t\in [0,1]$,  which joins $(x_c,y_c,z_c)$ and $(x_1,y_1,z_c)$. Now consider the arch of parabola from $(x_1,y_1,z_c)$ to $(0,0,0)$ obtained by intersecting $\mathcal{P}$ with the vertical plane which contains these two given points and parametrized it as the curve $\beta$. The sum of $\alpha$ and $\beta$ provides a path from $(x_c,y_c,z_c)$ to $(0,0,0)$ (see Figure~\ref{figure:paths}~(b)). 

By Lemma~\ref{centroejez}, the characteristic polynomial has a pair of complex conjugate roots when the center of $\mathcal{S}$  is at $(0,0,0)$. Assume $x_c\neq 0 \neq y_c$, so $-a^2$ and $-b^2$ are not roots. Since there are no tangent points along the path we have built, there are no roots with multiplicity greater than $1$ (see Lemma~\ref{lemma:multipleroottangent}), and hence there are a pair of complex roots when the center of $\mathcal{S}$  is at $(x_c,y_c,z_c)$ as a consequence of Lemma~\ref{lemma:complexroots}. If $x_c=0$, then $-a^2$ is a root (see Lemma~\ref{lemma:a2-multipleroot}) and the images of $\alpha$ and $\beta$ belong to the plane $x_c=0$. Hence we decompose $f(\lambda)=(\lambda+a^2)h(\lambda)$   and argue as before with the polynomial $h$ to conclude the result.   
\endproof

\section{The proof of Theorem~\ref{th:th1}}\label{section:theproofofth}

\noindent{\it Proof of Theorem~\ref{th:th1}.} This section is devoted to prove the classification result of Table~\ref{table:main}. We analyze the cases $a\neq b$ and $a=b$ separately given their different behavior.

\subsection{The strictly elliptic case: $a\neq b$.}
We begin with a general ellipsoid and an elliptic paraboloid in a certain relative position. 
Since the characteristic roots of $\mathcal{E}$ and $\mathcal{P}$ are invariant under affine transformations, we apply the corresponding homotheties and rigid moves to the space so that the ellipsoid becomes a sphere and the elliptic paraboloid is in standard form (see Section~\ref{section:characteristic-polynomial}).  We assume first $a\neq b$ and use previous results as follows.

As a direct consequence of Lemma~\ref{lemma:non-contacto}, if $\mathcal{S}$  is interior to $\mathcal{P}$ then there are four real roots and either all of them are different or some of $-a^2$ and $-b^2$ are double roots (see Section~\ref{section:appendix-multiplerootsandtangency} for the $-a^2$ and $-b^2$ exceptional cases). Also, if $\mathcal{S}$  is exterior to $\mathcal{P}$ then there are two real positive and two real negative simple roots.

Lemma~\ref{lemma:contacto-complexas} shows that there are two complex conjugates roots in the non-tangent contact case. Note that the argument in Lemma~\ref{lemma:contacto-complexas} also shows that the other two roots are real.

Finally, the tangent cases are associated with a multiple root by Theorem~\ref{th:tangency}. Moreover, tangent relative positions can be obtained by moving a sphere continuously from an interior or an exterior position. Since the roots are continuous functions on the coefficients of the characteristic polynomial, the interior tangency results in four real roots, at least one of which has multiplicity strictly greater than $1$, and the exterior tangency results in two negative roots and one double root which is positive.

\subsection{The circular paraboloid case: $a=b$}\label{section:appendix-a=b}
To finish the proof of Theorem~\ref{th:th1} it remains to consider the case $a=b$. 
If $a=b$ the paraboloid is circular and is an exceptional case in the analysis developed in Sections~\ref{section:characteristic-polynomial} and \ref{section:relative positions}. Situations like this, where the quadric is a surface of revolution, were considered previously in the literature (see \cite{BPST_Hyperboloid}). When we work with a circular paraboloid and a sphere, $-a^2$ is always a root of the characteristic polynomial. Furthermore, due to the rotational symmetry, the problem can be reduced in one dimension by considering the intersection of the plane that contains the axis of the paraboloid and the center of the sphere. Hence, $f(\lambda)=(\lambda+a^2)h(\lambda)$, so one needs to study the third degree polynomial $h$ and the relative positions between a circumference and a parabola.

However,  the circular case is obtained using continuity of the roots of the characteristic polynomial from a paraboloid with  $a\neq b$. Since we already have the classification of the relative position if $a\neq b$, we opt for a direct approach based on the continuity of the roots of $f$ (see \cite{kato} for a complete reference). Thus, consider an $\epsilon$-paraboloid of the form  
\[
\frac{x^2}{a^2}+\frac{y^2}{(a+\epsilon)^2}-z=0,  \text{ with } \epsilon>0.
\]
For a given relative position of the circular paraboloid, it can be obtained as a limit of an $\epsilon$-paraboloid when $\epsilon\to 0$. Thus, since $0$ is never a root by Lemma~\ref{lemma:tecnical-results-1}, a position where $\mathcal{S}$  is interior to $\mathcal{P}$ and $\mathcal{S}$ is tangent to $\mathcal{P}$ from inside has roots $\lambda_3\leq \lambda_4<0$ for any $\epsilon$-paraboloid, so in the limit, when $\epsilon \to 0$, also has $\lambda_3\leq \lambda_4<0$. The fact that a position of contact between $\mathcal{S}$  and $\mathcal{P}$ provides complex roots for $f$ is obtained by redoing the argument of Lemma~\ref{lemma:contacto-complexas} for the circular paraboloid, taking into account that $-a^2$ is always a root and working with a third degree polynomial (see also \cite{BPST_Hyperboloid} for an analogous situation with a circular hyperboloid). For any $\epsilon$-paraboloid, that $\mathcal{S}$  is tangent from outside to $\mathcal{P}$ is characterized by a positive double root $\lambda_3=\lambda_4>0$, so when  $\epsilon \to 0$ this will happen too, because $0$ is not a root. Finally, if $\mathcal{S}$  is exterior to $\mathcal{P}$, then $0<\lambda_3<\lambda_4$ for any  $\epsilon$-paraboloid, so in the limit we have $0<\lambda_3\leq\lambda_4$ because $0$ is not a root. Now, Lemma~\ref{lemma:multipleroottangent} excludes the case $\lambda_3=\lambda_4$ which corresponds to tangent contact also if $a=b$. In summary, the classification given in Theorem~\ref{th:th1} is also valid when $a=b$. This completes the proof of Theorem~\ref{th:th1}.
\endproof

\section{Conclusions}\label{section:conclusions}
We have shown that the roots of a characteristic polynomial of order four are suitable to detect the relative position between an ellipsoid and an elliptic paraboloid if the ellipsoid is small in comparison with the elliptic paraboloid. Table~\ref{table:main} summarizes the five possible relative positions in terms of the roots. This classification sets the theoretical framework to develop efficient algorithms based on the analysis of the coefficients of the characteristic polynomial as in Table~\ref{table:coefficients}. Thus, contact detection between the two quadrics is simple enough to be applied in a continuous time-varying positional context (see those algorithms introduced in \cite{alberich2017,etayo2006} in an analogous context and references therein).

The exterior case is especially interesting, as the relative positions are directly detected as follows: the surfaces are not in contact if there are two different positive roots, they are tangent if there is a double positive root and they are in non-tangent contact if there are complex conjugate roots. This classification agrees with that given for two ellipsoids or an sphere and a circular hyperboloid (see \cite{BPST_Hyperboloid,wank-wang-kim}).

\section{Appendix}

\subsection{The ``smallness'' condition in more detail}\label{section:appendix-smallness}
The ``smallness'' hypothesis can be interpreted in terms of the principal curvatures of the surfaces. The fact that the maximum principal curvature of the paraboloid is less than or equal to the minimum curvature of the ellipsoid guarantees the ``smallness'' condition. Thus, for a paraboloid $\mathcal{P}$ given by equation \eqref{eq:paraboloid}, we intersect $\mathcal{P}$ with the vertical plane $y=0$ to obtain the parabola $z=\frac{x^2}{a^2}$. Since we are assuming $a\leq b$, this parabola is the one with greatest curvature. We parametrize the parabola by  $\alpha(t)=(t,{t^2}/{a^2})$ and compute the curvature using the expression $\kappa(t)=\frac{\|\alpha'(t)\times \alpha''(t)\|}{\|\alpha'(t)\|^3}$ (see, for example, \cite{docarmo}) to obtain
\[
\kappa(t)=\frac{2 a^4}{\left(\sqrt{a^4+4t^2}\right)^3}.
\]
The curvature attains its maximum at $t=0$, this is, the vertex of the parabola, and $\kappa(0)=\frac{2}{a^2}$.

In general, if we consider an ellipsoid, we shall compare the minimal principal curvature of the ellipsoid with the value $2/a^2$. Thus, for example, for an ellipsoid given in standard form:
\[
\frac{x^2}{c_1^2}+\frac{y^2}{c_2^2}+\frac{z^2}{c_3^2}=1 \qquad \text{ with } c_1\leq c_2 \leq c_3,
\]
the minimum principal curvature is attained at the co-vertices of the ellipse at the $XZ$-plane. The value of the curvature at that point is $\frac{c_1}{c_3^2}$. Therefore, the ``smallness'' hypothesis reads $\frac{c_1}{c_3^2}\geq  \frac{2}{a^2}$ in terms of the parameters of the quadrics.

\begin{figure}[h]\label{figure:contraexemplo}
	\includegraphics[width=3cm]{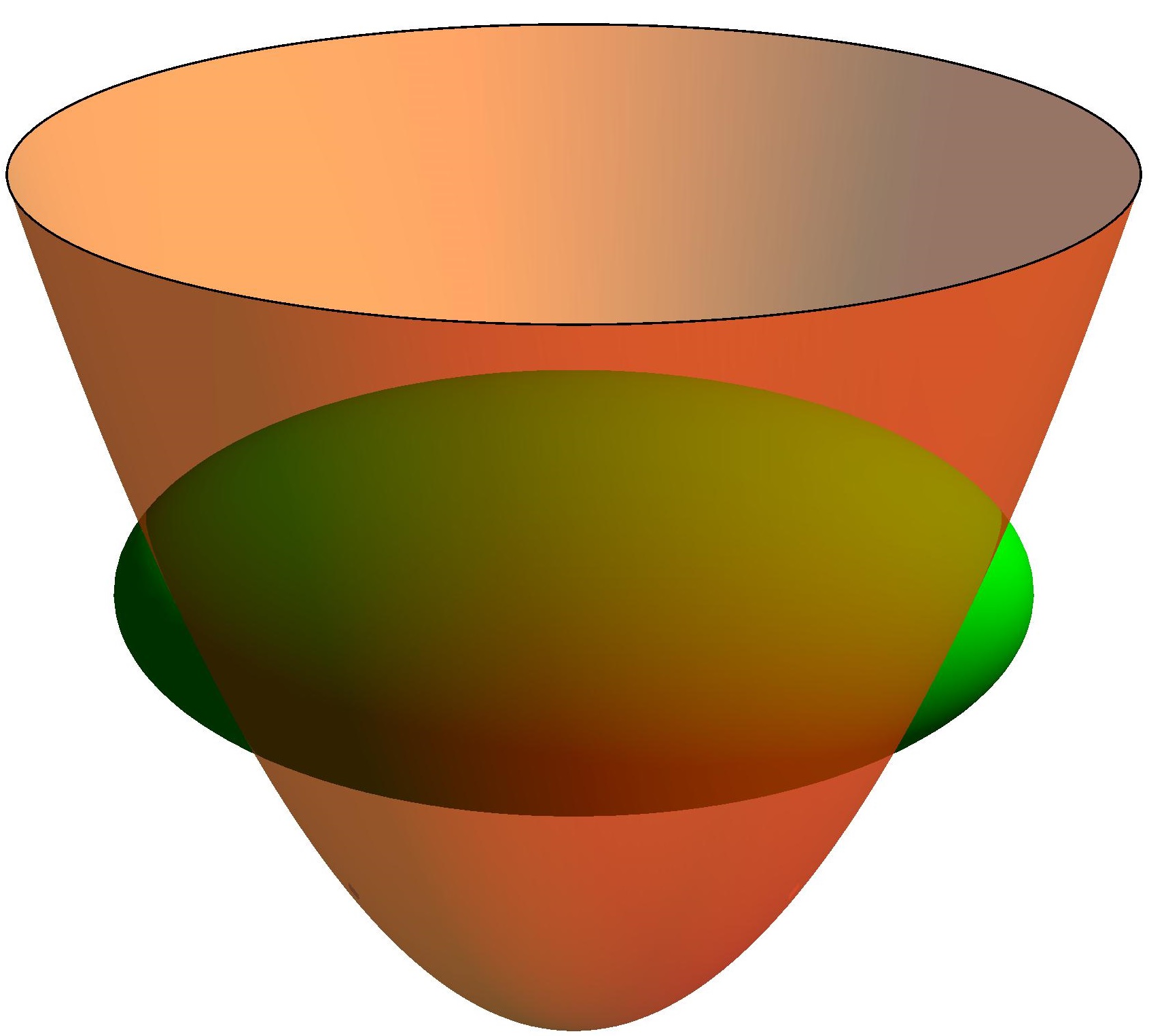}
	
	\caption{The paraboloid $x^2+\frac{y^2}{16}=z$ is intersected by the ellipsoid $\frac{x^2}{4}+y^2+4 z^2=1$ in two connected components, since the minimum principal curvature of the ellipsoid is $\frac{1}{8}$ and the maximum principal curvature of the paraboloid is $2$, so the ``smallness'' condition is not satisfied.}
\end{figure}


\subsection{The proof of Theorem~\ref{th:tangency}}\label{section:appendix-multiplerootsandtangency}

We have seen in Section~\ref{section:relative positions} that the roots $-a^2$ and $-b^2$ are 
 exceptional. We analyze their possible multiplicities as follows.
\begin{lemma}\label{lemma:a2-multipleroot}
	Assume $a\neq b$. Then:
	\begin{itemize}
		\item[(i)] $-a^2$ is a root if and only if $x_c=0$.
		\item[(ii)] If $-a^2$ is a root, then it has multiplicity at least $2$ if and only if 
		 $r^2=a^2z_c-\dfrac{a^2}{b^2-a^2}y_c^2-\dfrac{a^4}{4}$.
		 \item[(iii)] $-a^2$ has multiplicity $3$ if and only if $x_c=y_c=0$ and $z_c=r=\frac{a^2}2$.
	\end{itemize}
\end{lemma}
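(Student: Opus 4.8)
The plan is to work directly from the explicit characteristic polynomial \eqref{eq:characteristic-polynomial} and exploit the factorization that becomes available once $-a^2$ is known to be a root. For part~(i), I would simply substitute $\lambda=-a^2$ into the displayed expression for $f$: as noted in the proof of Lemma~\ref{lemma:-a2y-b2root}, one gets $f(-a^2)=x_c^2(a^2-b^2)/b^2$, and since $a\neq b$ this vanishes if and only if $x_c=0$. This part is immediate and serves mainly to set up the decomposition used in the rest.

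For part~(ii), the key step is to use (i) to write $x_c=0$ and then observe that with $x_c=0$ the polynomial $-4a^2b^2\,f(\lambda)$ has $(\lambda+a^2)$ as a factor. I would carry out the polynomial division explicitly, writing $-4a^2b^2 f(\lambda)=(\lambda+a^2)\,q(\lambda)$ for a cubic $q$ whose coefficients are affine-linear expressions in the remaining parameters $y_c,z_c,r^2,b^2$. Then $-a^2$ has multiplicity at least $2$ precisely when $q(-a^2)=0$, and computing $q(-a^2)$ and setting it to zero should yield exactly the stated relation $r^2=a^2 z_c-\frac{a^2}{b^2-a^2}y_c^2-\frac{a^4}{4}$ (the denominator $b^2-a^2$ being harmless since $a\neq b$). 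An alternative, perhaps cleaner, route is to note that $-a^2$ is a double root iff $f(-a^2)=f'(-a^2)=0$, and since $f(-a^2)=0$ is already guaranteed by $x_c=0$, it suffices to impose $f'(-a^2)=0$; differentiating \eqref{eq:characteristic-polynomial} and evaluating at $-a^2$ gives a single linear equation in $r^2$ that rearranges to the claimed identity. Either way the content is a routine but slightly lengthy computation.

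For part~(iii), I would build on (i) and (ii): multiplicity $3$ forces $x_c=0$ and forces the relation in (ii) to hold, and additionally requires $q'(-a^2)=0$ (equivalently $f''(-a^2)=0$). Imposing this third condition, together with the constraint already derived, should pin down $y_c$. I expect $f''(-a^2)=0$ to produce an equation from which $y_c=0$ follows (here it will be important that $a\neq b$, so that any coefficient of the form $(b^2-a^2)$ multiplying $y_c^2$ does not degenerate); once $y_c=0$, the relation from (ii) collapses to $r^2=a^2 z_c-\frac{a^4}{4}$, i.e. $r^2=a^2(z_c-\tfrac{a^2}{4})$. To conclude $z_c=r=\tfrac{a^2}{2}$ one still needs an extra input beyond the three derivative conditions: this is where the ``smallness'' hypothesis $2r\le a^2$ enters. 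Indeed, on the $OZ$-axis (Lemma~\ref{centroejez} and \eqref{eq2roots}) the roots besides $-a^2,-b^2$ are $-2(z_c\pm\sqrt{z_c^2-r^2})$, and the triple root $-a^2$ must coincide with one of these while the whole configuration stays consistent with $2r\le a^2$; a short analysis of $\lambda^2+4z_c\lambda+4r^2$ having $-a^2$ as a root forces $z_c=r=a^2/2$. This last gluing step — reconciling the algebraic conditions with the geometric ``smallness'' constraint to get the precise values rather than a one-parameter family — is the main obstacle, and I would handle it by combining the $OZ$-axis description from Section~\ref{section:oz-axis} with the inequality $2r\le a^2$.

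Finally, I would double-check the result against the already-recorded remark that ``$-a^2$ a triple root'' is exactly the borderline case where the maximum curvature of the vertical parabola equals the curvature of a great circle of the sphere, i.e. $2/a^2=1/r$, which is consistent with $r=a^2/2$; and the condition $z_c=r$ says the sphere is internally tangent to $\mathcal{P}$ at the vertex, matching the geometric picture in Figure~\ref{figure:parabola1}. This sanity check confirms that the three algebraic conditions plus ``smallness'' indeed single out the point $(0,0,a^2/2)$ with $r=a^2/2$.
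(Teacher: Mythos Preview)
Your approach to (i) and (ii) is essentially identical to the paper's: substitute, factor out $(\lambda+a^2)$ (the paper uses the equivalent factor $(a^{-2}\lambda+1)$), and evaluate the quotient at $-a^2$.

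For (iii) the overall strategy---factor once more, impose a third vanishing condition, then invoke smallness---also matches the paper, but you have misdiagnosed where smallness enters. The third condition $h(-a^2)=0$ (with $f(\lambda)=(a^{-2}\lambda+1)^2 h(\lambda)$) does \emph{not} force $y_c=0$; it gives
\[
z_c=\frac{a^2}{2}+\frac{y_c^2\, b^2}{(b^2-a^2)^2}.
\]
Substituting this into the relation from (ii) yields
\[
r^2=a^4\Bigl(\frac{y_c^2}{(b^2-a^2)^2}+\frac14\Bigr),
\]
a genuine one-parameter family in $y_c$. It is the smallness bound $r\le a^2/2$ (i.e.\ $r^2\le a^4/4$) that now forces $y_c=0$, after which $z_c=a^2/2$ and $r=a^2/2$ drop out of the two relations with no further input. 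Your plan to first extract $y_c=0$ from the algebra and \emph{then} apply smallness via the $OZ$-axis description of Lemma~\ref{centroejez} therefore has the dependencies reversed: the algebra alone leaves $y_c$ free, and smallness is precisely what collapses it (so appealing to Lemma~\ref{centroejez} at that stage would be premature, since that lemma already assumes $y_c=0$). Once you reorder the argument this way it is the paper's proof verbatim.
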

\proof
First, substitute using expression \eqref{eq:characteristic-polynomial}: $f(-a^2) =x_c^2 (-b^2+ a^2)/b^2$ to check that $f(-a^2)=0$ if and only if $x_c=0$. This proves assertion $(i)$.

Assume $x_c=0$. Then $f$ decomposes as $f(\lambda)=(a^{-2}\lambda+1)g(\lambda)$ where
$g(\lambda)=-(b^{-2}\lambda+1) \left(r^2+\lambda^2/4+z_c\lambda\right)+b^{-2}\lambda y_c^2$.
Thus $g(-a^2) = 0$ if and only if $r^2=a^2 z_c-a^4/4-a^2y_c^2/(b^2-a^2)$, so assertion $(ii)$ follows.

Assume $x_c=0$ and $r^2=a^2z_c-{a^2}/({b^2-a^2})y_c^2-{a^4}/{4}$. Then $f$ decomposes as $f(\lambda)=(a^{-2}\lambda+1)^2h(\lambda)$ where 
\[
h(\lambda)=\lambda^2 +(4z_c+b^2-a^2)\lambda+4b^2z_c-a^2b^2-\frac{4y_c^2b^2}{b^2-a^2}.
\]
We have that, $h(-a^2) = 0$  if and only if $z_c= a^2/2+y_c^2 b^2 /(b^2-a^2)^2$. Substituting $z_c$ in  $r^2=a^2z_c-{a^2}/(b^2-a^2)y_c^2-{a^4}/{4}$ we obtain $r^2=a^4\left(y_c^2/(b^2-a^2)^2+ 1/4\right)$. Since $2r\leq a^2$ by the ``smallness'' condition, we conclude that $y_c^2/(b^2-a^2)^2 \leq 0$, so necessarily $y_c=0$. Hence $h(-a^2) = 0$ implies that $y_c=0$ and $2r=a^2=z_c$ as in assertion $(iii)$. The converse is immediate.
\qed

Note that the condition for $-a^2$ to be a double root can be rewritten as $z_c=\dfrac{y_c^2}{b^2-a^2}+\dfrac{r^2}{a^2}+\dfrac{a^2}{4}$. This expression evidences the fact that if $-a^2$ is a double root, then the center of $\mathcal{S}$, $(0,y_c,z_c)$, is interior to $\mathcal{P}$.

Analogous assertions to $(i)$ and $(ii)$ in Lemma~\ref{lemma:a2-multipleroot} follow for the root $-b^2$. However, $-b^2$ cannot be a triple root if $a\neq b$, since for that to happen it is necessary that $r=b^2/2$ and that contradicts the ``smallness'' assumption.

{\it Proof of  Theorem~\ref{th:tangency}}. We need to show that a root $-a^2$ with multiplicity equal to $2$ is not associated with tangency between $\mathcal{S}$  and $\mathcal{P}$. Then Theorem~\ref{th:tangency} will follow from Lemma~\ref{lemma:tangent-implies-multipleroot}, Lemma~\ref{lemma:multipleroottangent} and Lemma~\ref{lemma:a2-multipleroot} taking into account that $-b^2$ is never a triple root. 

If $-a^2$ is a root, we have a tangent point associated to it at $X_0$ if and only if $X_0$ is an  eigenvector (meaning it is a solution of $(-a^2 P+S)X=0$) and verify either $X_0^t\mathcal{S} X_0=0$ (these imply $X_0^t\mathcal{P} X_0=0$).
	
Assume $-a^2$ is a root of multiplicity $2$. Then $x_c=0$ and  $r^2=a^2z_c-{a^2}/(b^2-a^2)y_c^2-{a^4}/{4}$.
For $X=(x,y,z,1)^t$, we solve $(-a^2 P+S)X=0$ where
	\[
	-a^2 P+S=\left(   \begin{array}{cccc}
	0&0&0&0 \\
	0&-a^2 b^{-2}+1&0&-y_c \\
	0&0&1&-(z_c -a^2 /2)\\
	0&-y_c&-(z_c -a^2 /2)&-r^2 +y_c^2+z_c^2
	\end{array} \right ),
	\]
to obtain  $$(0,\frac{(b^2-a^2)y}{ b^2}- y_c,     -z_c+z+\frac{a^2}2,    z_c^2+z(\frac{a^2}2-z_c)+y_c^2-yy_c-r^2    )=0.$$ 
These equations have a solution of the form  
\begin{equation}\label{eq:form-tangent-vector}
x\in {\mathbb R}, \,  y=\frac{b^2y_c}{b^2-a^2}, \, z=z_c-\frac{a^2}2
\end{equation}  
where  $ z_c^2+(z_c-a^2/2)(a^2/2-z_c)+y_c^2-(b^2y_c/(b^2-a^2))y_c-r^2 =0$. Substitute $y$ and $z$ in the later condition to reduce it to
 $ z_ca^2 -y_c^2 a^2/(b^2-a^2) -a^4/4-r^2 =0$, which is satisfied because $-a^2$ is a double root, so this later condition impose no new restrictions.
	
Now, since $(x,y,z)$ is a tangent point we have that $r^2=x^2+(y-y_c)^2+(z-z_c)^2$. We use $z=z_c-a^2/2$ from \eqref{eq:form-tangent-vector} to see  that 
$$r^2=x^2+(y-y_c)^2+\frac{a^4}4\geq \frac{a^4}4$$
and due to the ``smallness'' assumption ($2r\leq a^2$) we have that $r=a^2/2$. Moreover, from \eqref{eq:form-tangent-vector} we conclude $x=0$ and $y=y_c=0$, so $-a^2$ is indeed a triple root by Lemma~\ref{lemma:a2-multipleroot}.
	A similar argument exclude the possibility of having a tangent point associated to the root $-b^2$ with multiplicity $2$.
	\qed


\end{document}